\documentclass[a4paper,UKenglish,cleveref, autoref]{lipics-v2019}

\bibliographystyle{plainurl}

\nolinenumbers
\input{pcsptables}
\usepackage{mystyle}



\title{Dichotomy for symmetric Boolean PCSPs}

\author{Miron Ficak}{Theoretical Computer Science Department, Faculty of Mathematics and Computer Science, Jagiellonian
	University}{miron.ficak@student.uj.edu.pl}{https://orcid.org/0000-0003-3104-6354}{}
\author{Marcin Kozik}{Theoretical Computer Science Department, Faculty of Mathematics and Computer Science, Jagiellonian
	University}{marcin.kozik@uj.edu.pl}{https://orcid.org/0000-0002-1839-4824}{}
\author{Miroslav Ol\v s\'ak}{Department of Algebra, Charles University}{mirek@olsak.net}{}{}
\author{Szymon Stankiewicz}{Theoretical Computer Science Department, Faculty of Mathematics and Computer Science, JagiellonianUniversity}{szymon.stankiewicz@student.uj.edu.pl}{https://orcid.org/0000-0003-2235-4849}{}
\authorrunning{M. Ficak, M. Kozik, M. Ol\v s\'ak, S. Stankiewicz} 

\ccsdesc{Theory of computation~Complexity theory and logic}
\ccsdesc{Theory of computation~Constraint and logic programming}

\Copyright{Miron Ficak, Marcin Kozik, Miroslav Ol\v s\'ak, Szymon Stankiewicz} 

\keywords{Promise constraint satisfaction problem, PCSP, Algebraic approach}
\funding{Research was partially supported by National Science Centre, Poland grant no. 2014/2013/B/ST6/01812.}
\relatedversion{A full version of the paper is avaialable at \url{TBA}.}
\category{Track A: Algorithms, Complexity and Games}
\EventEditors{Christel Baier, Ioannis Chatzigiannakis, Paola Flocchini, and Stefano Leonardi}
\EventNoEds{4}
\EventLongTitle{46th International Colloquium on Automata, Languages, and Programming (ICALP 2019)}
\EventShortTitle{ICALP 2019}
\EventAcronym{ICALP}
\EventYear{2019}
\EventDate{July 9--12, 2019}
\EventLocation{Patras, Greece}
\EventLogo{eatcs}
\SeriesVolume{132}
\ArticleNo{52}

\begin{document}
\hideLIPIcs

\maketitle

\begin{abstract}
  In one of the most actively studied version of Constraint Satisfaction Problem,
  a CSP is defined by a relational structure called a template.
  In the decision version of the problem 
  the goal is to determine whether a structure given on input admits a homomorphism into this template.
  Two recent independent results of Bulatov~[FOCS'17] and Zhuk~[FOCS'17] state that 
  each finite template defines CSP which is tractable or NP-complete.

  In a recent paper Brakensiek and Guruswami~[SODA'18] proposed an extension of the CSP framework.
  This extension, called Promise Constraint Satisfaction Problem,
  includes many naturally occurring computational questions, e.g. approximate coloring, that cannot be cast as CSPs. 
  A PCSP is a combination of two CSPs defined by two similar templates;
  the computational question is to distinguish a YES instance of the first one from  a NO instance of the second.

  The computational complexity of many PCSPs remains unknown.
  Even the case of Boolean templates~%
  (solved for CSP by Schaefer [STOC'78]) 
  remains wide open.
  The main result of Brakensiek and Guruswami [SODA'18] shows that Boolean PCSPs
  exhibit a dichotomy~(PTIME vs. NPC) when ``all the clauses are symmetric and allow for negation of variables''.
  In this paper we remove the ``allow for negation of variables'' assumption from the theorem.
  The ``symmetric'' assumption means that changing the order of variables in a constraint does not 
  change its satisfiability. 
  The ``negation of variables'' means that both of the templates share a relation which can be used
  to effectively negate Boolean variables.

  The main result of this paper establishes dichotomy for all the symmetric boolean templates. 
  The tractability case of our theorem and the theorem of Brakensiek and Guruswami are almost identical.
  The main difference, and the main contribution of this work, is the new reason for hardness
  and the reasoning proving the split.
\end{abstract}

\section{Introduction}
  Constraint Satisfaction Problems have been studied in computer science in many forms.
  In the general approach an instance of the CSP consists of variables and constraints.
  In the decision version of the problem the objective is to verify whether 
  there exists an evaluation of variables that meets all the constraints. 

  One particular type of CSPs received a lot of attention in the past years. 
  In this approach constraints are relations taken from a fixed, finite relational structure called a template.
  The interest in this particular version was driven by a conjecture of Feder and Vardi~\cite{FV98}
  postulating that each finite template defines a CSP which is tractable or NP-complete.

  A great variety of decision problems independently studied by computer scientists can be cast as CSPs.
  To name a few: 3-SAT, $k$-colorability, 
  (generalized) unreachability in directed graphs or solving systems of linear equation 
  over a finite field, are all CSPs defined by finite templates.
  The class of all the computational problems falling into the scope of the conjecture is very big
  and its verification was a gradual and lengthy process.
  Nevertheless, from the start, the claim was supported by strong evidence.
  In this context the classical result of Schaefer~\cite{Sch78} showing that the dichotomy holds for templates over Boolean domain,
  is perhaps the most important.

  The dichotomy for all the finite templates was recently confirmed by two, independent results of Bulatov~\cite{BulDich} and Zhuk~\cite{ZhukDich}.
  Both of them use the algebraic approach~\cite{JCG97,BJK00}, 
  where the complexity of a template is studied via compatible operations called polymorphisms.
  The algebraic approach proved very successful not only in the decision version of the CSP: 
  a number of important results in optimization~\cite{VCSP}, approximation~\cite{Robust} 
  etc. of the CSP is based on some versions of polymorphisms.

  A positive resolution of the dichotomy conjecture motivates the following question:
  is the class of CSPs unique, or maybe a part of a larger, natural class which also exhibits a dichotomy?
  Note that such a class should be amenable to some sort of the algebraic approach,
  as no other tools offer comparable power even in the case of the CSP.
  In the recent paper~\cite{PCSPSoda} Brakensiek and Guruswami proposed a candidate for such a class.

  The Constraint Satisfaction Problem defined by a fixed language
  can be cast as a problem of finding homomorphism from a relational structure given on input to a fixed template.
  The class proposed by Brakensiek and Guruswami as an extension of CSP is called Promise Constraint Satisfaction Problems.
  A PCSP is based on two CSPs with similar templates 
  and the question is to distinguish YES instances of the first CSP
  from NO instances of the second.

  To provide a few examples: 
  the CSP defined by an undirected clique~(without loops) of size $k$ as a template is just $k$-colorability.
  Defining PCSP by two cliques, say of sizes $k$ and $l$ satisfying $k<l$, we get the following problem:
  distinguish between the graphs with chromatic number $\leq k$ and those with chromatic number $>l$. 
  These problems are studied independently~\cite{DinurGraphs,Huang,BGGraphs,AlgebraicApproach}, 
  but the characterization of complexities for all pairs $(k,l)$ is either incomplete 
  or done under additional assumptions.

  Another example is a Boolean PCSP. 
  A single ternary relation $\set{(0,0,1),(0,1,0),(1,0,0)}$ defines a CSP which is known as Monotone-1-in-3-SAT,
  and similarly the relation $\set{0,1}^3\setminus\set{(0,0,0),(1,1,1)}$ gives rise to the CSP known as Monotone-NAE-SAT.
  Thus the question of distinguishing between instances which are satisfiable as Monotone-1-in-3-SAT instances
  and not satisfiable as Monotone-NAE-SAT instances is a PCSP.
  Surprisingly this problem is tractable even allowing for the negation of variables~\cite{TwoPlusE,PCSPSoda}.
  
  Further examples of problems expressible as PCSPs can be found in~\cite{PCSPSoda}.
  Promise Constraint Satisfaction Problems generalize CSPs 
  and include many additional, natural problems.
  The algebraic approach to the CSP can be adjusted to work in the case of the PCSP.
  The first Galois correspondence between PCSPs and the polymorphisms was introduced in~\cite{PCSPSoda},
  and the more abstract algebraic approach was proposed in~\cite{AlgebraicApproach}.
  Despite all the interest, PCSPs lack a classification result that would play the role of Schaefer's theorem.
  This motivates a more systematic study of Boolean PCSPs.
  
  The main result of Brakensiek and Guruswami,
  Theorem~2.1 in \cite{PCSPSoda},
  establishes dichotomy for a certain class of Boolean PCSPs.
  A PCSP template falls into this class if all the relations in the templates are symmetric~%
  (i.e. invariant under permutations, or equivalently, determined by Hamming weights of the tuples)
  and additionally the template contains a relation which can be used to negate Boolean variables in both CSP templates.
  As the additional relation is binary and symmetric,
  the result concerns all the symmetric templates containing this particular relation.
  In this paper we remove the additional assumption and show that all symmetric  Boolean templates exhibit a dichotomy.
  
  Let us further compare the results.
  The algorithms required for the original and extended result are exactly the same: 
  Gaussian elimination or linear programming relaxation depending on the polymorphisms of the template.
  The list of polymorphisms implying tractability differs slightly
  as we need to allow additional threshold functions~%
  (Boolean functions returning $0$ if and only if the number of $1$'s is below a threshold).
  Unfortunately the condition which guarantees hardness in the original paper fails 
  when the negating relation is absent.
  The new hardness condition and a more involved analysis of 
  the minion of polymorphisms are required in the proof 
  and constitute the main contribution of this paper.

  The publication is organized as follows.
  The next section contains basic definitions commonly used in context of an algebraic approach to the CSP or the PCSP.
  Section~\ref{sect:tractability} contains a list of  polymorphisms that guarantee tractability,
  statement of the main theorem and a proof of the tractability case.
  In section~\ref{sect:notation} we introduce notation and nomenclature.
  Section~\ref{sect:hardness} contains the algebraic condition implying hardness of PCSP
  and a proof of this implication.
  The main part of the reasoning behind the result 
  is focused on showing  
  that lack of 
  polymorphisms from the tractability list implies, in our case, the condition for hardness.
  Section~\ref{sect:proof} contains an overview of this proof
  and a complete reasoning can be found in with the majority of proper arguments  in the Appendix.

\section{Basic definitions}\label{sect:basic}
This section contains basic definitions and notions
relevant to CSP and PCSP.
A relation $R\subseteq A^n$ is an {\em $n$-ary relation} 
and the set $A$ is its {\em universe}.
A  relation is {\em symmetric}, 
if for every permutation $\sigma$ of $\sset n$~%
(where $\sset n$ is defined to be $\set{1,\dotsc,n}$)
if $(a_1,\dotsc,a_n)\in R$ then also $(a_{\sigma(1)},\dotsc,a_{\sigma(n)})\in R$.
A relation $R^m\subseteq (A^m)^n$ is a {\em Cartesian power} of $R\subseteq A^n$  
if $(a^1,\dotsc,a^n)\in R^m$ if and only if $(a^1_i,\dotsc,a^n_i)\in R$ for every $i$~%
(i.e. $R^m$ is defined from $R$ coordinate-wise).

A {\em relational structure $\relstr A$} is a tuple $(A;R_1,\dotsc R_n)$ where each $R_i$ is a relation on $A$,
and we call a relational structure {\em symmetric} if all its relations are.
Two relational structures are \textit{similar} if they have the same sequence of arities of their relations.
E.g. a relational structure $(A;R_1,\dotsc R_n)$ and it's $m$-th power $(A^m;(R_1)^m,\dotsc (R_n)^m)$
are similar.
For two similar structures say $\relstr A = (A;R_1,\dotsc,R_n)$ and $\relstr B = (B;S_1,\dotsc,S_n)$ a 
function $h:A\rightarrow B$ is {\em a homomorphism} if for every $i$ and every tuple $(a_1,\dotsc,a_m)\in R_i$
the tuple $(h(a_1),\dotsc,h(a_m))\in S_i$.

The Constraint Satisfaction Problem defined by a relational structure $\relstr B$~%
(denoted by $\CSP(\relstr B)$) is the following decision problem:
\begin{center}
  \begin{tabular}{p{.15\textwidth}p{.8\textwidth}}
    Input: & a relational structure $\relstr A$ similar to $\relstr B$\\
    Question: & does there exists  a homomorphism from $\relstr A$ to $\relstr B$?
  \end{tabular}
\end{center}
The relational structure $\relstr B$ is called a {\em template} of such a problem.

The Promise Constraint Satisfaction Problem is a promise problem
defined by a pair of similar relational structures $(\relstr B, \relstr C)$~%
such that there exists a homomorphism from $\relstr B$ to $\relstr C$.
The $\PCSP(\relstr B,\relstr C)$ is:
\begin{center}
  \begin{tabular}{p{.15\textwidth}p{.8\textwidth}}
    Input: & a relational structure $\relstr A$ similar to $\relstr B$ and $\relstr C$ \\
    Output YES: & if there exists a homomorphism from $\relstr A$ to $\relstr B$ \\
    Output NO: & if there is no homomorphism from $\relstr A$ to $\relstr C$.
  \end{tabular}
\end{center}
Just like in the case of the CSP, the pair $(\relstr B,\relstr C)$ is called a template.
Clearly $\PCSP(\relstr B,\relstr B)$ is $\CSP(\relstr B)$ and therefore the PCSP generalizes the CSP.

Both problems exhibit a Galois correspondence i.e. instead of studying the structure of the template
one can choose to analyze the structure of template's polymorphisms~\cite{JCG97,BJK00,PCSPSoda,AlgebraicApproach}.
A {\em polymorphism} of a relational structure $\relstr B$ is a homomorphism from a finite Cartesian power of $\relstr B$
to $\relstr B$.
Similarly a polymorphism of a PCSP template $(\relstr B,\relstr C)$ 
is a homomorphism from a finite Cartesian power of $\relstr B$ to $\relstr C$.
We denote the set of all polymorphisms of $\relstr B$ by $\pol{\relstr B}$,
and the set of all polymorphisms of $(\relstr B,\relstr C)$ by $\pol{\relstr B,\relstr C}$.

For each relational structure $\relstr B$ the set $\pol{\relstr B}$ is {\em clone} i.e. 
it contains projections and is closed under composition.
Similarly for a pair $(\relstr B,\relstr C)$ the set $\pol{\relstr B,\relstr C}$ is a minion.
{\em A minion} is a set of functions closed under taking {\em minors} i.e. creating functions by identifying variables, permuting variables and introducing dummy variables.
If $f(x_1,\dotsc,x_n)$ is a function and $f'(x) = f(x,\dotsc,x)$
then $f'(x)$ is the unary minor of $f(x_1,\dotsc,x_n)$ 
and $f''(x,y) = f(x,y,\dotsc,y)$ is a binary minor of $f(x_1,\dotsc,x_n)$.

In some cases, instead of considering a PCSP template $\left((A;R_1,\dotsc,R_n),(B;S_1,\dotsc, S_n)\right)$
we work with an equivalent concept of {\em a language} i.e. a sequence of pairs $[R_1,S_1],\dotsc,[R_n,S_n]$.
We say that a pair $[S,T]$ is {\em compatible} with a minion $\clo M$,
if every member of $\clo M$ maps an appropriate power of $S$ to $T$~%
(the exponent of the power is the arity of the operation). 

{\em A primitive positive formula~(pp-formula)} is a formula constructed using
atomic formulas, conjunction and existential quantification.
Such formulas play a special role in CSP and PCSP:
if a relation $R$ has a primitive positive definition in $\relstr B$
then $R$ is compatible with $\pol{\relstr B}$ and adding $R$ to $\relstr B$ does not change the computational complexity of
the $\CSP(\relstr B)$.
Similarly, if a pair $[R,S]$ has a pp-definition in the language of $(\relstr A,\relstr B)$~%
(pp-formula in $[R_i,S_i]$ defines such $[R,S]$ in the natural way)
then $[R,S]$ is compatible with $\pol{\relstr A,\relstr B}$
and adding it to the language/template does not change the complexity \cite{PCSPSoda}.
One more construction,
called {\em strict relaxation},
plays an important role in the theory of PCSP:
if $[R_i,S_i]$ is an element of the language $(\relstr B,\relstr C)$
and $R\subseteq R_i$ while $S_i\subseteq S$ 
then $[R,S]$ is compatible with $\pol{\relstr A,\relstr B}$
and adding it to the language/template does not change the complexity.

\section{Main theorem and tractability}\label{sect:tractability}
Focusing on the Boolean domain we present the main theorem of the paper 
and prove that the tractable cases are indeed solvable in P.
In this part of the proof our paper does not deviate much from~\cite{PCSPSoda};
the polymorphisms which imply tractability are almost the same with an exception
of the threshold case. 

\begin{itemize}
  \item A $n$-ary function is {\em a max}~({\em a min}) if it returns maximum~(resp. minimum) of its arguments~%
    (in the natural order on $\{0,1\}$).
  \item A function $f(x_1,\dotsc,x_n)$ is {\em an alternating threshold} if $n=2k+1$ and
  \[
    f(x_1,\ldots,x_k,x_{k+1},\ldots,x_n)=
    \begin{cases}
      0&\text{ if $\sum_{i=1}^k x_i \geq \sum_{i=k+1}^{n} x_i$,}\cr
      1&\text{ if $\sum_{i=1}^k x_i < \sum_{i=k+1}^{n} x_i$,}\cr
    \end{cases}
  \]
  \item A function $f(x_1,\dots,x_n)$ is {\em an xor} if $n$ is odd and
    $f(x_1,\dotsc,x_n)= x_1+\dotsb+x_n\bmod 2$.
  \item A function $f(x_1,\dotsc,x_n)$ is {\em a $q$-threshold}~%
    (where $q$ is a rational between $0$ and $1$) if
    \[
      f(x_1,\ldots,x_n)=
      \begin{cases}
        0&\text{ if $\sum_{i=1}^n x_i < nq$,}\cr
        1&\text{ if $\sum_{i=1}^n x_i > nq$,}\cr
      \end{cases}
    \]
    and $nq$ is not an integer.
    Note that all the evaluations of the $f(x_1,\dotsc,x_n)$ are determined.
\end{itemize}
We denote the set of all max functions by \Max,
all the min functions by \Min,
all alternating thresholds by \AT
all xor by \Xor and all $q$-thresholds by \THR{q}.
For a set of functions $F$ by $\compl F$ we denote $\{1-f(x_1,\dotsc,x_n)\suchthat f(x_1,\dotsc,x_n)\in F\}$.
We are ready to state the main result of the paper.

\begin{theorem}\label{thm:main}
  Let $(\relstr A,\relstr B)$ be a symmetric, Boolean PCSP language.
  If $\pol{\relstr A,\relstr B}$ contains a constant or includes at least one of the sets
  $\Max$, $\Min$, $\AT$, $\Xor$ , $\THR{q}$~(for some $q$),
  $\compl \Max$, $\compl \Min$, $\compl \AT$, $\compl \Xor$ or $\compl{\THR{q}}$~(for some $q$)
  then $\PCSP(\relstr A,\relstr B)$ is tractable.
  Otherwise it is NP-complete.
\end{theorem}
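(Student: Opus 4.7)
The plan is to prove the two directions of the dichotomy separately: the tractability direction mirrors Brakensiek--Guruswami closely, while the hardness direction is the novel contribution of the paper.

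\textbf{Tractability.} For each family on the list I would give an explicit polynomial-time algorithm. A constant polymorphism $c\in\{0,1\}$ immediately makes every instance a YES-instance: the map sending every variable to $c$ is a homomorphism into $\relstr B$ from any input similar to $\relstr A$, by applying the constant $1$-ary polymorphism coordinate-wise to every constraint tuple. For $\Max$ (respectively $\Min$), a Horn-like greedy propagation on an appropriate relaxation of the source template solves the PCSP. For $\Xor$ and $\AT$, the presence of affine-type polymorphisms of all sufficiently large odd arities reduces the PCSP to solving systems of linear equations over the two-element field. For $\THR{q}$, the standard linear-programming relaxation of the input followed by threshold-rounding at $q$ produces a valid homomorphism into $\relstr B$, precisely because the polymorphism property certifies that rounding preserves every constraint. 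The ``complement'' versions are handled by the Boolean involution $x\mapsto 1-x$, which interchanges the roles of $0$ and $1$ in both templates.

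\textbf{Hardness.} Here the plan has three stages, reflecting the structure announced in Sections~\ref{sect:hardness}--\ref{sect:proof}. First, state a minion-level condition on $\pol{\relstr A,\relstr B}$ that should play the role of the Brakensiek--Guruswami hardness condition in the absence of the negating relation. Second, prove that this condition implies NP-hardness of $\PCSP(\relstr A,\relstr B)$ via a reduction from a known NP-hard problem (a symmetric variant of $1$-in-$3$-SAT or a gap-colouring problem being the natural candidates), built using pp-constructions and strict relaxations inside the language of $(\relstr A, \relstr B)$. Third, show that whenever $\pol{\relstr A,\relstr B}$ avoids every family on the tractability list, the hardness condition must hold.

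\textbf{The main obstacle} will be the third stage. Symmetry identifies every polymorphism with a function on Hamming weights, so ``no $\Max$, $\Min$, $\Xor$, $\AT$ or $\THR{q}$, and none of their complements'' translates into a strong combinatorial constraint on the allowed weight profiles -- essentially, the zero-set of a symmetric polymorphism cannot be a prefix, a suffix, a parity class, or a threshold interval. The argument would then proceed by induction on arity: starting from a polymorphism violating the hardness condition, I would take carefully chosen minors -- identifying, permuting and dummifying variables -- to extract a lower-arity polymorphism whose zero-set falls into one of the forbidden shapes, contradicting the hypothesis. Managing this case analysis, keeping track of how zero-sets and one-sets evolve under minors, and matching the analysis exactly to the tractability list so that the dichotomy is sharp, is the core difficulty and the main contribution of the paper.
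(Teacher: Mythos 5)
Your tractability direction is essentially the one used in the paper (and inherited from Brakensiek--Guruswami), and it is fine up to two small inaccuracies: the $\AT$ case is handled by an affine/linear-equation relaxation over the integers (resp.\ rationals), not over the two-element field, since an alternating threshold is not an $\mathbb{F}_2$-affine operation; and a constant polymorphism yields a homomorphism of the input into the \emph{second} template (so the instance is never a NO instance, modulo constraints over empty relations), rather than ``making every instance a YES-instance''.

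The hardness direction, however, has a genuine gap, and it sits exactly where you locate ``the main obstacle''. Your reduction of the problem to ``weight profiles'' rests on the claim that symmetry identifies every polymorphism with a function of Hamming weights; this is false. Only the \emph{relations} are symmetric -- members of $\pol{\relstr A,\relstr B}$ are arbitrary Boolean functions compatible with the relation pairs, and nothing forces them to be symmetric (the example after Theorem~\ref{thm:real}, $\max(x_1,\mathrm{THR}_{1/2})$, is already non-symmetric). The whole difficulty of the paper is the analysis of such general functions, which is why the proof works with \onesets, \zerosets, small fixing sets and antichains of pairwise disjoint \onesets/\zerosets rather than weight profiles; an induction on arity over symmetric zero-sets does not address the actual minion. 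Moreover, your second stage -- NP-hardness of the minion-level condition via a gadget/pp-reduction from symmetric $1$-in-$3$-SAT or gap colouring -- is not a workable substitute for what the paper does: the hardness condition is the structural statement that $\pol{\relstr A,\relstr B}$ has small fixing sets and bounded antichains (Theorem~\ref{thm:real}), and hardness is then obtained through the $\varepsilon$-robust bipartite minor condition criterion of~\cite{AlgebraicApproach} (Lemma~\ref{l:erobust}), with Proposition~\ref{prop:epsi} building, from maximal families of disjoint small \onefsets, probability distributions on variables witnessing that no $\varepsilon$-robust condition holds; the non-idempotent case further requires splitting $\pol{\relstr B,\relstr C}$ into the part with identity unary minor and the flipped part with negation unary minor. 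Finally, the derivation of the structural condition itself is not a single minor-taking induction but the chain of Propositions~\ref{arel-crel}, \ref{antichain-equivalent}, \ref{get-xor-at} and~\ref{get-small-fixing} (the last proved by induction on $a+b+c+d$ with the auxiliary star-compatibility notion). None of these ideas appear in the proposal, so the core of the theorem remains unproved.
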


\noindent Comparing the statement of Theorem~\ref{thm:main} and Theorem~2.1 of~\cite{PCSPSoda}
we find two differences:
the earlier paper additionally assumes that negated variables can appear in instances
and it allows the authors to substitute ``$\THR{q}$ for some $q$'' with $\THR{1/2}$
in the list of conditions that force tractability.

In the remaining part of this section we will show the tractability case of Theorem~\ref{thm:main}.
The reasoning differs very little from the one found in~\cite{PCSPSoda} and 
therefore we cover it quickly:
If $\pol{\relstr A,\relstr B}$ contains a constant function $\PCSP(\relstr A,\relstr B)$ is clearly tractable;
if it includes $\Max$, $\Min$ and $\Xor$ tractability follows from Lemma~3.1 of~\cite{PCSPSoda}.
If $\AT\subseteq\pol{\relstr A,\relstr B}$ then Claim 2 of Section 3.2~\cite{PCSPSoda} implies tractability.
Finally the case of $\THR{q}$ is a minor generalization of the argument in Claim 1 of Section 3.2 in the same paper,
or a special case of Theorem 5.2 in~\cite{BlendLinearRings}.

The remaining cases reduce, just like in~\cite{PCSPSoda}, to the ones from the previous paragraph:
let relational structure $\relstr{B'}$ be obtained from $\relstr B$ by exchanging the roles of $0$ and $1$~%
(that is, in every relation in $\relstr B$, in every tuple of this relation and at every position in this tuple we change $x$ to $1-x$).
The YES instances of $\PCSP(\relstr A,\relstr{B'})$ and $\PCSP(\relstr A,\relstr B)$ are trivially the same
and so are the NO instances.
If $\compl\Min\subseteq\pol{\relstr A,\relstr B}$ then $\Min\subseteq\pol{\relstr A,\relstr{B'}}$
and, by the cases already established, $\PCSP(\relstr A,\relstr{B'})$ is tractable.
Clearly $\PCSP(\relstr A,\relstr B)$ is tractable as well and all the remaining tractable cases can be dealt with the same way.

\section{The notation for symmetric Boolean PCSPs}\label{sect:notation}
  In order to show NP-hardness in the remaining case of Theorem~\ref{thm:main},
  we require a few definitions which allow us to work with symmetric Boolean relations
  and Boolean function concisely.

  Every symmetric relation 
  $R\subseteq \{0,1\}^m$  
  is uniquely determined by the set
  $I\subseteq \set{0,\dotsc,m}$ consisting of
  the Hamming weights of its elements.
  This fact allows us to use $R$ and $I$ interchangeably.
  Let $(\relstr B,\relstr C)$ be a symmetric, Boolean PCSP template with language $[R_1,S_1],\dotsc,[R_n,S_n]$
  where the arities of the relations are $a_1,\dotsc,a_n$.
  We will denote such a language by
  $\rel{I_1,J_1,a_1}, \dotsc,\rel{I_n,J_n,a_n}$ 
  where $I_i$~($J_i$) is a set of Hamming weights of  elements of $R_i$~($S_i$ respectively). 
  We will often use a flattened form of this notation: we will denote $\rel{\set{1},\set{1,2},2}$ by $\rel{{1},{1,2},2}$ and so on
  as well as 
  $\sset{n} = \set{1, \ldots, n}$.
  
  Focusing on compatibility;
  an operation $f(x_1,\dotsc,x_n)$ is compatible with $\rel{0,0,1}$ if and only if 
  $f(0,\dotsc,0)=0$ and compatible with $\rel{1,1,1}$ if and only if $f(1,\dotsc,1)=1$.
  The pair $\rel{1,1,2}$ defines negation in $\relstr A$ and $\relstr B$
  and therefore the main result of~\cite{PCSPSoda} is a special case of Theorem~\ref{thm:main};
  the additional assumption states that $\rel{1,1,2}$ is in the language of PCSP.

  We proceed to illustrate a number of pp-definitions and strict relaxations
  that appear repeatedly in the proofs.
  Using $\rel{I,J,n}$ and $\rel{0,0,1}$ one can define  
  $\rel{I\setminus \{n\},J\setminus\{n\},n-1}$ using the following pp-formula:
  \begin{equation*}
    \exists x_1\  \rel{0,0,1}(x_1)\wedge \rel{I,J,n}(x_1,\dotsc,x_n).
  \end{equation*}
  Similarly 
  \begin{equation*}
    \exists x_1\  \rel{1,1,1}(x_1)\wedge \rel{I,J,n}(x_1,\dotsc,x_n)
  \end{equation*}
  defines $\rel{I',J',n-1}$ where 
  $I' = \{i-1\suchthat i\in I\text{ and } i\neq 0\}$ and 
  $J' = \{j-1\suchthat j\in J\text{ and } j\neq 0\}$. 
  The strict relaxations we use are straightforward: 
  take $\rel{I, J, n}$ with $i\in I$ while $j\notin J$
  then, for example, $\rel{i,{\{0,\dotsc,n\} \setminus \set {j} },n}$ is a strict relaxation of $\rel{I,J,n}$.

  In the proof of tractability for $(\relstr B,\relstr C)$~%
  (at the end of Section~\ref{sect:tractability})
  we swapped the role of $0$ and $1$ in $\relstr C$.
  In the new notation we change
  $\rel{I_1,J_1,a_1}, \dotsc,\rel{I_n,J_n,a_n}$
  to 
  $\rel{I_1,J'_1,a_1}, \dotsc,\rel{I_n,J'_n,a_n}$ 
  where
  $J'_k = \{a_k - j\suchthat j\in J_k\}$.
  In some of the proofs we reuse this construction,
  although we usually swap for both $\relstr B$ and $\relstr C$ at the same time.

  We define notation for Boolean functions next.
  A Boolean function $f(x_1,\dotsc,x_n)$ is {\em idempotent}
  if $f(0,\dotsc,0) = 0$ and $f(1,\dotsc,1)=1$.
  By the discussion above a minion is idempotent~%
  (i.e. contains idempotent functions only)
  if it is compatible with $\rel{0,0,1}$ and $\rel{1,1,1}$.
  Moreover the idempotent part of $\pol{\relstr B,\relstr C}$
  can be obtained by adding these pairs to the language.

  For a Boolean function $f(x_1,\dotsc,x_n)$ and a set $U\subseteq\sset{n}$
  the value $f(U)$ is defined as $f(x_1,\dotsc,x_n)$ where $\set{i\suchthat x_i=1} = U$.
  When $n$ is clear from the context we can write $\compl U$ instead of $\sset{n}\setminus U$.
  Let $f(x_1,\dotsc,x_n)$ be a Boolean function $U\subseteq \sset{n}$ then $U$ is 
  \begin{itemize}
    \item a \oneset if $f(U) = 1$, 
    \item a \zeroset if $f(\compl U) = 0$,
    \item a \onefset(\zerofset) if every $V\supseteq U$ is a \oneset(resp. \zeroset).
  \end{itemize}

  Moreover we say that a minion has {\em small fixing sets},
  if there exists a constant $N$ such that every function from the minion has a \onefset smaller than $N$,
  or every function from the minion has a \zerofset smaller than $N$.
  Finally we say that a minion has {\em bounded antichains},
  if there exist a constant $M$ such that no function in the minion has $M$ pairwise disjoint \onesets,
  and no function in the minion has $M$ pairwise disjoint \zerosets.

\section{The hardness proof}\label{sect:hardness}
  
  In order to satisfy the assumptions of Lemma \ref{l:erobust}, 
  we need some structural properties of the minion $\pol{\relstr A,\relstr B}$.
  The following theorem collects these properties and is a cornerstone of our classification. 
  
  \begin{theorem}\label{thm:real}
    Let $\relstr A,\relstr B$ be a symmetric PCSP language such that $\pol{\relstr A,\relstr B}$ is idempotent.
    If $\pol{\relstr A,\relstr B}$ does not include $\Max$, $\Min$, $\AT$, $\Xor$ and $\THR{q}$~(for any $q$),
    then $\pol{\relstr A,\relstr B}$ has small fixing sets and bounded antichains.
  \end{theorem}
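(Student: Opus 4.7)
The plan is to prove Theorem~\ref{thm:real} by contrapositive: assume $\pol{\relstr A,\relstr B}$ contains none of $\Max$, $\Min$, $\AT$, $\Xor$, or $\THR q$ for any rational $q$, and derive that $\pol{\relstr A,\relstr B}$ must have both small fixing sets and bounded antichains. Idempotence, via $\rel{0,0,1}$ and $\rel{1,1,1}$, lets us freely identify coordinates of polymorphisms to the constants $0$ and $1$. A preliminary observation: the $0/1$-swap from the end of Section~\ref{sect:tractability} exchanges $\Max$ with $\Min$, the families $\AT$, $\Xor$, $\THR q$ with their complementary versions, and simultaneously swaps $1$-sets with $0$-sets and $1$-fixing-sets with $0$-fixing-sets. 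Hence in each case below it suffices to treat one side of the dichotomy.

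For bounded antichains I would show: if for every $M$ the minion contains a polymorphism with $M$ pairwise disjoint $1$-sets, then $\Max$ lies in the minion. Given such an $f$ with disjoint $1$-sets $S_1,\dots,S_M$, I form a minor by identifying each $S_i$ to a new variable $y_i$ and identifying the coordinates outside $\bigcup_i S_i$ to variables that are fed the constant $0$ via $\rel{0,0,1}$. The resulting operation outputs $1$ whenever some $y_i = 1$ and $0$ when all $y_i = 0$; combined with idempotence this is an arity-$M$ instance of $\Max$. Letting $M$ grow and using minor-closure produces $\Max$ of every arity, contradicting our assumption. The dual argument with disjoint $0$-sets produces $\Min$.

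For small fixing sets the situation is more delicate: one must assume that for every $N$ there is a polymorphism $f_N$ with minimum $1$-fixing-set of size $\geq N$ and a (possibly different) polymorphism $g_N$ with minimum $0$-fixing-set of size $\geq N$. Since the language is symmetric, every polymorphism's input-$1$-positions interact with the language only through their cardinality, so the relevant combinatorial invariant of $f_N$ and $g_N$ is the Hamming-weight profile of the value set. Large minimum fixing-sets on both sides force this profile to be threshold-shaped; a Ramsey-style passage to a subsequence with consistent modular behavior identifies which of three regimes occurs -- a stable rational ratio $q$, yielding $\THR q$; a pure parity regime, yielding $\Xor$; or a two-sided balance regime, yielding $\AT$ -- and a pp-definition using $\rel{I,J,n}$ together with the idempotent constants extracts the exact operation.

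The main obstacle is this extraction step in the small-fixing-sets case: the polymorphisms $f_N$ and $g_N$ need only be \emph{approximately} threshold-shaped, yet we must produce a polymorphism that is exactly $\THR q$, $\AT$, or $\Xor$ on every sufficiently large arity. In particular, for the threshold case one must secure that $nq$ is not an integer for infinitely many $n$ and that the extracted minors agree with $\THR q$ on every intermediate weight. Choosing the right minors to knock out ambiguous middle weights, and then distinguishing whether the residual weight set corresponds to a threshold, a parity, or an alternating threshold, constitutes the fine combinatorial analysis that I expect to absorb the bulk of the argument.
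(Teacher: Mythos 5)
There is a genuine gap, and it sits exactly where the paper's main contribution lies: your claim that unbounded antichains of \onesets force $\Max$ into the minion is false, and the minor construction you describe does not do what you assert. Identifying each of the disjoint \onesets $S_1,\dotsc,S_M$ to a fresh variable $y_i$ and feeding $0$ to the remaining coordinates only guarantees the value $1$ when \emph{exactly one} $y_i$ is set to $1$: a \oneset is merely a set $U$ with $f(U)=1$, and nothing forces $f(S_{i}\cup S_{j})=1$, so the resulting minor need not be a max. In fact the paper proves the opposite tendency (Lemma~\ref{onesettozeroset}): once the minion is compatible with a pair $\crel$ (which follows from the absence of $\Max$ by Proposition~\ref{arel-crel}), the union of $c$ pairwise disjoint \onesets is a \zeroset. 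A concrete counterexample to your claim is $\Xor$ itself: an odd-arity xor has arbitrarily many pairwise disjoint \onesets (the singletons), yet the minion it generates contains no max. This is precisely why the paper cannot dispose of unbounded antichains via $\Max$/$\Min$ and instead needs Proposition~\ref{get-xor-at}: assuming only compatibility with some $\arel$ and some $\crel$, an unbounded antichain is massaged (through functions sending all unit vectors to $1$, ``almost negations'', and $e$-flippability/parity arguments) into either $\Xor$ or $\AT$. Your proposal compresses this, the heart of the hardness analysis, into a one-paragraph argument that fails at its first step.

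The second half of your plan is also too thin to stand on its own, though it is at least aimed in a plausible direction. The paper derives small fixing sets from the conjunction of \emph{bounded antichains} and the absence of every $\THR q$ (Proposition~\ref{get-small-fixing}), extracting from the language concrete pairs such as $\rel{a,\{0,\dotsc,b-1\},b}$ and $\rel{c,\{1,\dotsc,d\},d}$ with $c/d<a/b$ and running an induction on $a+b+c+d$ that requires a weakened ``star-compatibility'' notion to survive the passage to the restricted functions $f_U$ on minimal \zerosets. Your sketch (``threshold-shaped profiles'', ``Ramsey-style passage'') neither uses the bounded-antichain hypothesis, which is essential to bounding the family $\mathcal L_f$ of small witnesses, nor explains how an exact $\THR q$, $\Xor$ or $\AT$ is produced from approximate behavior; and since your treatment of bounded antichains is broken, the hypothesis you would need here is not even available in your architecture. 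To repair the proposal you would have to replace the $\Max$ argument by something like Propositions~\ref{arel-crel}--\ref{get-xor-at} and then supply the inductive argument of Proposition~\ref{get-small-fixing} in place of the Ramsey heuristic.
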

  The Brakensiek and Guruswami~\cite{PCSPSoda} version of Theorem~\ref{thm:real}
  requires that $(\relstr A,\relstr B)$ contains $\rel{1,1,2}$ and concludes 
  that there exists a constant $M$ 
  such that every member of $\pol{\relstr A,\relstr B}$ has a set of size at most $M$
  which is a \onefset and a \zerofset at the same time.
  The following example illustrates that their condition fails in our case.

  \begin{example}
    Consider PCSP defined by a language consisting of $\rel{0,0,1}$, $\rel{1,1,1}$, $\rel{1,{1,2},3}$ and $\rel{1,{1,2},4}$.
    It is easy to verify that it falls into the hardness case of Theorem~\ref{thm:main}.
    On the other hand for each odd $n$  the function $f(x_1,\dotsc,x_n)$ defined as maximum of $x_1$ and $n$-ary element of $\THR{1/2}$
    is compatible with all the relational pairs.
    These functions have no uniform bound on the size of minimal \zerofsets. 
  \end{example}

  In the reminder of this section we use Theorem~\ref{thm:real} to finish the proof of Theorem~\ref{thm:main}.
  We begin by introducing the machinery developed in~\cite{AlgebraicApproach}~%
  (a direct proof is possible, but involves a bit more technical considerations).
  The paper~\cite{AlgebraicApproach} defines {\em minor identity} as a 
  formal expression of the form
  \begin{equation*}
    f(x_1,\dotsc,x_n)\approx g(x_{\pi(1)},\dotsc,x_{\pi(m)})
  \end{equation*}
  where $f$ and $g$ are function symbols~(of arity $n$ and $m$, respectively), $x_1, \ldots, x_n$ are
  variables, and $\pi : \sset{m} \rightarrow \sset{n}$.
        A minor identity is satisfied in a minion $\clo M$~(of functions from $A$ to $B$) if there exists an interpretation
        of the function symbols $f$ and $g$ in $\clo M$, say $\zeta$, satisfying
  \begin{equation*}
  	\zeta (f) (a_1, \ldots, a_n) = \zeta(g)(a_{\pi(1)}, . . . , a_{\pi(m)})
  \end{equation*}
  	for all $a_1, \ldots, a_n \in A$.
  	
        {\em A bipartite minor condition} is a finite set of minor identities
  	in which function symbols used on the right- and left-hand sides are
  	disjoint. 
        A minor condition is {\em satisfied in a minion},
        if there exists an interpretation simultaneously satisfying all the identities. 
        A minor condition is {\em trivial} if it is satisfied in every minion, in particular,
  	in the minion consisting of all projections on a set A that
  	contains at least two elements.
        Finally, still following~\cite{AlgebraicApproach},
        a bipartite minor condition $\Sigma$ is $\varepsilon$-robust~(for some $\varepsilon > 0$)
  	if no $\varepsilon$-fraction of identities from $\Sigma$ is trivial.
  
  \begin{lemm}[Corollary~5.8 from \cite{AlgebraicApproach}]\label{l:erobust}
  	If there exists an  $\varepsilon > 0$ such that $\pol{\relstr A,\relstr B}$ does not satisfy any
  	$\varepsilon$-robust bipartite minor condition, then $PCSP(\relstr A,\relstr B)$ is NP-hard.
  \end{lemm}
In order to apply Lemma \ref{l:erobust} to $\PCSP(\relstr A,\relstr B)$ we need to
ensure that $\pol{\relstr A,\relstr B}$ does not satisfy any $\varepsilon$-robust bipartite minor condition.
Our first step is to prove it in the idempotent case.
  \begin{prop}\label{prop:epsi}
    Let $\clo M$ be an idempotent minion with small fixing sets,
    and bounded antichains.
    Then $\clo M$ does not satisfy any $\varepsilon$-robust bipartite minor condition.
  \end{prop}
  \begin{proof}
    The proof follows the same pattern as the proofs of Propositions 5.10 and 5.12 in~\cite{AlgebraicApproach} so we will use the notation from those Propositions in this proof.
    All we need to do is to find $\varepsilon>0$ 
    and a mapping assigning to each member of $\mathscr M$ 
    a probability distribution on its variables.
    The probability distribution needs to satisfy the following condition:
    if $f,g\in\mathscr M$ and $f(x_1,\dotsc,x_n)\approx g(x_{\pi(1)},\dotsc,x_{\pi(m)})$
    then  
    \begin{itemize}
      \item choosing a variable from the LHS according to the distribution for $f$ and
      \item choosing a variable from the RHS according to the distribution for $g$, 
    \end{itemize}
    with probability greater than $\varepsilon$ we will choose the same variable.

    In order to find such $\varepsilon$ and the mapping for $\mathscr M$ we assume
    without loss of generality that small fixing sets in $\mathscr M$ are \onefsets
    and their size as well as a size of an antichain is bounded by constant $M$.
    We choose $\varepsilon < 1/M^4$ and define the probability distribution as follows:
    fix $f\in\mathscr M$ and from the collection of \onefsets smaller than $M$ choose a maximal 
    subset of pairwise disjoint \onefsets.
    Let $U_f$ be the set of numbers appearing in this subset and
    the probability distribution for $f$ is the uniform distribution on $U_f$.

    Take an identity as above; as $|U_f|\leq M^2$ and $|U_g|\leq M^2$ in order 
    to prove the claim it suffices to show that $\pi(U_g)\cap U_f\neq\emptyset$.
    Let $U$ be one of the \onefsets which defined $U_g$.
    The set $\pi(U)$ is a \onefset of $f$ and its size is bounded by $M$.
    The maximality of the subset defining $U_f$ implies that $U_f$ and $\pi(U)$ intersect,
    which concludes the proof.
  \end{proof}

  We are now ready to finish the proof of Theorem~\ref{thm:main}~%
  (modulo Theorem~\ref{thm:real}) following a reasoning similar to the one used in~\cite{PCSPSoda}.
  Let $(\relstr B,\relstr C)$ be a PCSP language such that  $\pol{\relstr B,\relstr C}$ doesn't contain constant functions
  and do not include any of 
  $\Max$, $\Min$, $\AT$, $\Xor$, $\THR{q}$, $\compl \Max$, $\compl \Min$, $\compl \AT$, $\compl \Xor$, $\compl{\THR{q}}$.
  Let $(\relstr B_+,\relstr C_+)$ be $(\relstr B,\relstr C)$ with $\rel{1,1,1}$ and $\rel{0,0,1}$ added.
  By Theorem~\ref{thm:real} and Proposition~\ref{prop:epsi} 
  $\pol{\relstr B_+,\relstr C_+}$ does not satisfy any $\varepsilon$-robust minor condition~%
  (for some fixed $\varepsilon$).
  Note that $\pol{\relstr B_+,\relstr C_+}$ consists of these elements of $\pol{\relstr B,\relstr C}$
  which have identity as the unary minor.
  Thus $\pol{\relstr B,\relstr C}\setminus\pol{\relstr B_+,\relstr C_+}$ consists of elements of $\pol{\relstr B,\relstr C}$
  which have $x\mapsto 1-x$ as the unary minor.

  Consider the set $\compl{\pol{\relstr B,\relstr C}\setminus\pol{\relstr B_+,\relstr C_+}}$.
  It is a minion and it is equal to $\pol{\relstr B_-,\relstr C_-}$,
  where $(\relstr B_-,\relstr C_-)$ is obtained from $(\relstr B,\relstr C)$ in two steps:
  first the roles of $0$ and $1$ are swapped in $\relstr C$~%
  (just like in the tractability proof)
  and then $\rel{1,1,1}, \rel{0,0,1}$ are added to the language.
  Applying Proposition~\ref{prop:epsi} to $(\relstr B_-,\relstr C_-)$ 
  we conclude that $\compl{\pol{\relstr B,\relstr C}\setminus\pol{\relstr B_+,\relstr C_+}}$~%
  does not satisfy any $\varepsilon$-robust minor condition~%
  (for some $\varepsilon$).
  The same holds for $\pol{\relstr B,\relstr C}\setminus\pol{\relstr B_+,\relstr C_+}$ and
  therefore  $\pol{\relstr B,\relstr C}$ is a disjoint union of two minions which, for some $\varepsilon$, do not satisfy
  any $\varepsilon$-robust minor conditions.
  It follows that $\pol{\relstr B,\relstr C}$ does not satisfy any $\varepsilon$-robust minor condition
  and by Lemma \ref{l:erobust} the $\PCSP(\relstr B,\relstr C)$ is NP-hard.

\section{Proof overview}\label{sect:proof}
Our proof of Theorem~\ref{thm:real} consists of the following four propositions.
\begin{prop}
  \label{arel-crel}
  Let $(\relstr A,\relstr B)$ be a symmetric language such that $\clo M = \pol{\relstr A,\relstr B}$ is idempotent.
  If $\clo M$ does not include neither $\Max$ nor $\Min$,
  then it is compatible with some relational pair $\arel$
  and some relational pair $\crel$.
\end{prop}

\noindent For the next proposition we need to specialize the notion of bounded antichains.
We say that a minion has {\em bounded antichains of \onesets~(\zerosets)} if 
there exists a uniform bound on the number of pairwise disjoint
\onesets~(\zerosets respectively)
an element of the minion can have.

\begin{prop}
  \label{antichain-equivalent}
  Let $\clo M$ be a minion compatible with $\arel$ and $\crel$.
  Then $\clo M$ has bounded antichains of \onesets if and only if
  $\clo M$ has bounded antichains of \zerosets.
\end{prop}

\begin{prop}
  \label{get-xor-at}
  Let $(\relstr A,\relstr B)$ be a symmetric language such that $\clo M = \pol{\relstr A,\relstr B}$ is idempotent.
  If $\clo M$ is compatible with some $\arel$, some $\crel$ 
  and does not have bounded antichains then $\clo M$ includes $\Xor$ or $\AT$.
\end{prop}

\begin{prop}
  \label{get-small-fixing}
  Let $(\relstr A,\relstr B)$ be a symmetric language such that $\clo M = \pol{\relstr A,\relstr B}$ is idempotent.
  If $\clo M$ has bounded antichains
  and does not include any of $\THR q$ then it has
  small fixing sets.
\end{prop}

The structure of the proof is as follows:
if $\pol{\relstr A,\relstr B}$ has $\Min$ or $\Max$ we are in a tractable case.
Otherwise we split the reasoning in two cases:
either $\pol{\relstr A,\relstr B}$  fails the bounded antichain condition and by Proposition~\ref{get-xor-at} we are tractable due to $\AT$ or $\Xor$,
or we have bounded antichains and by Proposition~\ref{get-small-fixing} we are either tractable due to $\THR q$ or have small fixing sets
which implies hardness (by Proposition~\ref{prop:epsi}).
Proposition~\ref{antichain-equivalent} allows us to ``flip'' the template if necessary.

In this section, we prove Propositions~\ref{arel-crel}
and~\ref{antichain-equivalent}. We also provide proof sketches of
Propositions~\ref{get-xor-at} and~\ref{get-small-fixing}.
Detailed proofs can be found in Appendices~\ref{app:get-xor-at} and~\ref{app:get-small-fixing} respectively.

\begin{proof}[Proof of Proposition~\ref{arel-crel}]
  The proof splits into two parts:
  \begin{itemize}
  \item $\clo M$ does not have $\Min$ then $\clo M$ is compatible with $\arel$
  \item $\clo M$ does not have $\Max$ then $\clo M$ is compatible with $\crel$
  \end{itemize}
  Proof of both cases is analogous, so we will only prove the first
  part. Let us assume that $\clo M = \pol{\relstr A,\relstr B}$ and $\clo M$ does
  not have $\Min$. So there must be $\rel{I, J, n}$ in the language of $(\relstr A,\relstr B)$ such
  that $\Min$ is not compatible with it. This implies that there exists
  $b<a<n$ such that $a \in I$ and $b \not \in J$. 
  Now, using pp-definitions and strict relaxations from Section~\ref{sect:notation}, 
  we will show that $\clo M$ is compatible with $\rel{a-b,{1,\dotsc,a-b+1},a-b+1}$:
  \begin{itemize}
    \item use strict relaxation of $\rel{I, J, n}$ to obtain $\rel{a,{0,\dotsc,b-1,b+1,\dotsc, n} , n}$;
    \item from the last pair pp-define, using $\rel{0,0,1}$, the pair $\rel{a, {0,\dotsc,b-1,b+1,\dotsc,a+1}, a+1}$,
    \item finally from the previous pair pp-define, this time using $\rel{1,1,1}$, the required pair $\rel{a-b,{1,\dotsc,a-b+1},a-b+1}$.
  \end{itemize}
 
\end{proof}

\noindent The following lemma is used in the proof of Proposition~\ref{antichain-equivalent}.
\begin{lemm}\label{onesettozeroset}
	Let $\clo M$ be a minion. Then:
	\begin{itemize}
		\item if $\clo M$ is compatible with some $\arel$, 
                  then for each $f$ in $\clo M$ a union of $a$-many pairwise disjoint \zerosets is a \oneset.
		\item if $\clo M$ is compatible with some $\crel$, 
                  then for each $f$ in $\clo M$ a union of $c$-many pairwise disjoint \onesets is a \zeroset.
	\end{itemize} 
\end{lemm}
\begin{proof}
  The proofs of the two cases are analogous, so we will only prove the second
  one. Let $U_1, \ldots, U_c$ be disjoint \onesets of the $n$-ary function
  $f \in \clo M$ and $U = \bigcup_{i=1}^{c}U_i$.
  Since every coordinate $i$ occurs in exactly one set of
  $U_1, \ldots, U_{c},\compl U$ and $f$ is compatible with $\crel$,
  the tuple $(f(U_1), \ldots, f(U_c),f(\compl U))$ cannot
  evaluate to $(1,\ldots,1)$. Therefore $f(\compl U) = 0$ and $U$ is a \zeroset.
  See Figure~\ref{fig:conesets} for example.

  \begin{figure}[h]
    \begin{center}
      \tablexd
      \caption{Example of $c$ disjoint \onesets creating a \zeroset
        with $c=3$.
        The yellow column represents the result of an evaluation of function $f$ on tuples represented by other columns.
        The columns are in $\rel{1,{0,1,2,3},4}$ and 
        the grey cells are $U_1, \ldots, U_c$ while the red cells are $U$.
    }
      \label{fig:conesets}
    \end{center}
  \end{figure}
\end{proof}

\begin{proof}[Proof of Proposition~\ref{antichain-equivalent}]
	By using Lemma~\ref{onesettozeroset} we conclude that:
	\begin{itemize}
		\item if $f$ contains an antichain of \onesets of size $n$ then it also contains an antichain of \zerosets of size at least $\floor{\frac{n}{c}}$
		\item if $f$ contains an antichain of \zerosets of size $n$ then it also contains an antichain of \onesets of size at least $\floor{\frac{n}{a}}$
	\end{itemize}
	so if one of the antichains of \zerosets or \onesets is bounded then the other one has to be bounded as well.
\end{proof}

\begin{proof}[Proof sketch of Proposition~\ref{get-xor-at}] 
  Since $\clo M$ has unbounded antichains, 
  we can take a function from $\clo M$ with a arbitrarily large antichain of \onesets. 
  By taking its minor, we obtain $f$ satisfying
  \[
  f(1,0,\ldots,0) = f(0,1,0,\ldots,0) = \cdots = f(0,\ldots,0,1,0) = 1.
  \]
  Notice that the last coordinate is exceptional, it does not have to
  form a \oneset. 
  By taking further minors of $f$ we either get $g$,
  of arbitrarily large arity,
  that satisfies
  \[
    g(1,0,\ldots,0) = g(0,1,0,\ldots,0) = \cdots = g(0,\ldots,0,1) = 1,
  \]
  or 
  compatibility with $\AT$~%
  (see \fullversion). 
  We are left with the case when $g$'s, of arbitrarily large arity, are in $\clo M$.

  If $\clo M$ does not include \AT, it is
  compatible~%
  (after possibly changing ones to zeros and zeros to ones) with $\rel{1,{0,\dotsc,n-2,n},n}$ or
  $\rel{{0,d},{0,\dotsc,n-1},n}$ for some $d < n$. 
  We use these relational pairs for forcing further behavior of $g$, 
  and finally obtain an xor of an arbitrarily large arity. 
  This implies that \Xor is a subset of $\clo M$.
\end{proof}

\begin{proof}[Proof sketch of Proposition~\ref{get-small-fixing}]
  If a minion $\clo M$ has bounded antichains and does not have
  $q$-threshold for any $q$, we can find~%
  (skipping an easy case discussed in \fullversion%
  )
  positive integers $a,b,c,d$ such that $c/d < a/b < 1$ such that
  $\clo M$ is compatible with relational pairs
  \begin{equation}\label{eq}
    \rel{a,{0,\ldots,b-1},b},\quad\rel{c,{1,\ldots,d},d}.
  \end{equation}
  Notice that the converse, i.e. that these relational pairs prevent threshold, is
  clear since (\ref{eq}) disallows any $q$-threshold such that $q < a/b$ and
  any $q$-threshold such that $q > c/d$. 
  It can be shown that
  these relational pairs are the general obstacle to a threshold
  polymorphism. We prove the proposition by induction on $a+b+c+d$.

  For the reminder of the proof to work we are forced to work
  with weaker assumptions -- instead of $\clo M$ being
  compatible with (\ref{eq}) we assume that $\clo M$ is
  ``almost compatible'' with the relational pairs. Nevertheless, the
  ``almost compatibility'' notion
  is
  rather technical, and we ignore it in this sketch.
  For a formal proof, see
  \fullversion
  Here, let us simply assume that $\clo M$ is
  compatible with (\ref{eq}).

  It turns out that the only interesting case is $c/d < a/b < 1/2$. All
  the other cases can be either resolved directly or reduced to this
  one. Now, consider a minimal (ordered by inclusion) \zeroset $U$ and let
  $f_U$ denote $|U|$-ary operation obtained from $f$ by plugging zeros
  to every coordinate not contained in $U$.
  Since $f$ is compatible with $\rel{a,{0,\ldots,b-1},b}$ and $U$ is a
  \zeroset, $f_U$ is compatible with $\rel{c,{1,\ldots,d-c},d-c}$.
  Every \oneset in $f_U$ is also a \oneset in $f$, so $f_U$ has
  bounded antichains of \onesets.
  (bounded across every $f\in\clo M$ and every $U$).
  Moreover, since $U$ is minimal, the complement $\compl U$ of $U$ is
  ``almost'' a \oneset (every strict superset is).
  If $\compl U$ was a \oneset, $f_U$ would be compatible with
  $\rel{a,{0,1,\ldots,b-a-1},b-a}$ since $f$ is compatible with
  $\rel{a,{0,1,\ldots,b-1},b}$. This is where the weaker notion of
  compatibility (the star-compatibility) is necessary in the full proof. However for the sake of simplicity, assume that $f_U$ is
  compatible with $\rel{a,{0,1,\ldots,b-a-1},b-a}$. Since
  $f_U$ has bounded antichains of \onesets and it is compatible with
  relational pairs
  \[
  \rel{a,{0,1,\ldots,b-a-1},b-a}, \quad \rel{c,{1,\ldots,d-c},d-c}
  \]
  where $c/(d-c) < a/(b-a)$, it has also bounded antichains of \zerosets.
  Therefore, we can apply the induction hypothesis and
  obtain a small (bounded across every $f\in\clo M$ and every $U$)
  \onefset or \zerofset $V$ in $f_U$. For our purposes, we don't need
  to know that the set is fixing, it suffices that it is a \zeroset
  or a \oneset. Let $\mathcal L_f$ denote the set of all possible sets
  $V$ above across all minimal \zerosets $U$. From the induction
  hypothesis, we also get that either every $V\in\mathcal L_f$ is a
  \oneset in the appropriate $f_U$, or every $V\in\mathcal L_f$ is a
  \zeroset in the appropriate $f_U$.

  \begin{claim}
    The size of pairiwise disjoint subsystems of $\mathcal L_f$ is
    bounded by a number independent of the chosen $f\in\clo M$.
  \end{claim}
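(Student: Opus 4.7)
The plan is to dichotomize according to the alternative given by the induction hypothesis and, in each case, lift a pairwise disjoint subsystem of $\mathcal L_f$ to a pairwise disjoint family of \onesets or \zerosets of $f$, then apply the bounded-antichain assumption on $\clo M$. Throughout I fix pairwise disjoint $V_1,\ldots,V_k\in\mathcal L_f$ together with witnessing minimal \zerosets $U_1,\ldots,U_k$ of $f$, so $V_i\subseteq U_i$.

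First I handle the \oneset case. Each $V_i$ is a \oneset of $f_{U_i}$; since $V_i\subseteq U_i$ and $f_{U_i}$ is obtained from $f$ by plugging zeros outside $U_i$, this gives $f(V_i)=f_{U_i}(V_i)=1$, so $V_i$ is in fact a \oneset of $f$ itself. Pairwise disjointness of the $V_i$'s then yields $k$ pairwise disjoint \onesets of $f\in\clo M$, and the bounded-antichain assumption bounds $k$ by a constant independent of $f$.

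In the \zeroset case each $V_i$ is a \zeroset of $f_{U_i}$, which unfolds to $f(U_i\setminus V_i)=0$; equivalently, $V_i\cup\compl U_i$ is a \zeroset of $f$. The technical step here is to pass to a suitable minor $g$ of $f$ that collapses the common complementary parts $\compl(\bigcup_j U_j)$ shared by all these lifted \zerosets. The plan is to identify the coordinates outside $\bigcup_j V_j$ in a way that uses the minimality of each $U_i$ (so that $\compl U_i$ is ``almost'' a \oneset of $f$) together with the compatibility pairs $\rel{a,{0,\ldots,b-1},b}$ and $\rel{c,{1,\ldots,d},d}$ driving Proposition~\ref{get-small-fixing}, and to verify that the $V_i$'s become $k$ pairwise disjoint \zerosets of $g\in\clo M$. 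The bounded-antichain assumption applied to $g$ then bounds $k$.

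The main obstacle is precisely this construction in the \zeroset case: the direct lift $V_i\cup\compl U_i$ gives \zerosets with an unavoidable common core, so a minor of $f$ must be built to eliminate it. This is the same phenomenon that forces the ``star-compatibility'' machinery alluded to in the proof sketch, since $\compl U_i$ is only \emph{almost} a \oneset and not a genuine one, so the compatibility-driven promotion of $V_i\cup\compl U_i$ to a \zeroset of $g$ has to be done in the weaker sense. Once the minor $g$ with $k$ pairwise disjoint \zerosets is obtained, the bound on $k$ follows immediately from the bounded-antichain assumption on $\clo M$.
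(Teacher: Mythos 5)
Your \oneset case is exactly the paper's argument: since $V_i\subseteq U_i$, a \oneset of $f_{U_i}$ is a \oneset of $f$, and bounded antichains of \onesets in $\clo M$ bounds the family. The \zeroset case, however, is where the entire content of the claim lies, and there your text is an unexecuted plan whose proposed route would not work. A minor of $f$ can only identify, permute, or add dummy variables; it cannot plug in constants. So for the $V_i$ to become \zerosets of a minor $g$ you would need $f$ to vanish on the whole complement of $V_i$ with the collapsed block and the other $V_j$ set to $1$, and nothing you know gives that: the hypothesis only yields $f(U_i\setminus V_i)=0$, i.e.\ that $\compl{U_i}\cup V_i$ is a \zeroset of $f$. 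Nor does collapsing $\compl{\bigcup_j U_j}$ make these lifted \zerosets disjoint: two of them intersect at least in $\compl{U_i\cup U_j}$, which in general strictly contains the common core, and identification of variables merges shared coordinates into a shared coordinate of the minor rather than removing the overlap. Note also that the paper is careful never to apply the bounded-antichain hypothesis to a zero-plugged function such as $f_U$, precisely because such functions need not lie in $\clo M$; only \onesets are transferred back to $f$. Minimality of the $U_i$ and the star-compatibility machinery play no role in this particular claim.

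What the paper actually does is a single direct application of compatibility with $\rel{c,{1,2,\ldots 2c+1},2c+1}$, available from $\rel{c,{1,\ldots,d},d}$ because $d>2c$. Take $c$ pairwise disjoint $V_1,\ldots,V_c\in\mathcal L_f$ with witnessing minimal \zerosets $U_1,\ldots,U_c$, and evaluate $f$ on the $(2c+1)$ rows $\compl{U_1},\ldots,\compl{U_c}$, $U_1\setminus V_1,\ldots,U_c\setminus V_c$, and $V_1\cup\cdots\cup V_c$. Because $V_i\subseteq U_i$ and the $V_i$ are pairwise disjoint, every coordinate lies in exactly $c$ of these rows, and no disjointness of the $U_i$ is needed. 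The first $2c$ rows evaluate to $0$ (the $U_i$ and the $\compl{U_i}\cup V_i$ are \zerosets of $f$), so compatibility forces $f(V_1\cup\cdots\cup V_c)=1$: any $c$ disjoint members of $\mathcal L_f$ union to a \oneset of $f$. A disjoint subfamily of size $cM$ would therefore produce $M$ pairwise disjoint \onesets of $f$, contradicting bounded antichains, and the bound $cM$ is uniform in $f\in\clo M$. This is the step your proposal leaves open, and it needs no auxiliary minor at all.
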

  If every $V\in\mathcal L_f$ is a \oneset in the appropriate $f_U$, then $V$ is a
  \oneset in $f$ and the claim follows from $\clo M$ having bounded
  antichains. Let us prove the claim if every $V\in\mathcal L_f$ is a \zeroset in
  the appropriate $f_U$. Consider $c$ disjoint elements
  $V_1, \ldots, V_c\in\mathcal L$, and let $U_1, \ldots, U_c$ be the
  appropriate minimal \zerosets. Thus also every $\compl {U_i}\cup V_i$
  is a \zeroset. Since
  \[
  U_1, U_2, \ldots, U_c, \compl {U_1}\cup V_1, \compl {U_2}\cup V_2,
  \ldots, \compl {U_c}\cup V_c
  \]
  are \zerosets, $V_1\cup \ldots\cup V_c$ is a \oneset by compatibility with $\rel{c,{1,2,\ldots 2c+1},2c+1}$. Let $M$ be the bound on
  antichains of \onesets in $\clo M$, the size of antichains in
  $\mathcal L$ is bounded by $cM$.

  Finally, we use the claim to find a small \onefset in $f$. Consider any maximal
  sequence $V_1,\ldots,V_n\in\mathcal L$ of disjoint sets and let
  \[
  W = V_1\cup V_2\cup \ldots\cup V_n,
  \]
  Every \zeroset contains a minimal \zeroset, every minimal \zeroset
  contains some $V\in\mathcal L$ and every $V\in\mathcal L$ intersects $W$.
  Therefore every \zeroset intersects $W$, so $W$ is the desired
  \onefset.
\end{proof}

\appendix

\section{Proof of Proposition~\ref{get-xor-at}}\label{app:get-xor-at}
	Let \maintemplateM be a symmetric template, such that $\clo M = \pol{\relstr A,\relstr B}$ is idempotent.\\
	If $\clo M$ is compatible with some $\arel$, some $\crel$ \\
	and does not have bounded antichains then $\clo M$ includes $\Xor$ or $\AT$.\\

\begin{defn}
	Let \maintemplateM be a Boolean template. We define $flip \maintemplate = \neg \maintemplate$ (each zero converted to one and vice versa).
\end{defn}
\begin{defn}
	Let $\clo M$ be a minion. We define \[\flip{\clo M} = \set{f : \exists g \in \clo M \ \forall x \in \set{0, 1}^{arity(g)} \  f(x) = \neg g(\compl x)}\]
\end{defn}
\begin{corollary}
	Let \maintemplateM be a Boolean template, and $\clo M = \pol \maintemplateflatten$. Then $\flip{\clo M} = \pol{\flip{\maintemplateflatten}}$
\end{corollary}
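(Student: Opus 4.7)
The plan is to verify the two inclusions $\flip{\clo M}\subseteq\pol{\flip{\maintemplate}}$ and $\pol{\flip{\maintemplate}}\subseteq\flip{\clo M}$ separately, using the same direct bit-flipping computation. The whole argument rests on two elementary facts: the operator $g\mapsto(x\mapsto \neg g(\compl x))$ is an involution on Boolean functions, and coordinate-wise negation is a bijection between a symmetric relation $R$ and its flip $\neg R$. Because polymorphism is defined relation by relation, it suffices to do the bookkeeping for one pair $[R,S]$.

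For the first inclusion I would fix such a pair $[R,S]$ in $\maintemplate$, pick $g\in\clo M$ of arity $n$, and put $f(x)=\neg g(\compl x)$. To check that $f$ preserves $[\neg R,\neg S]$, take tuples $t^1,\dotsc,t^n\in\neg R$ of common arity $m$ and set $s^j=\compl{t^j}\in R$. Since $g$ preserves $[R,S]$, the coordinate-wise application of $g$ to $s^1,\dotsc,s^n$ produces a tuple in $S$; negating each coordinate of that tuple gives precisely the coordinate-wise application of $f$ to $t^1,\dotsc,t^n$, and it lies in $\neg S$. So $f\in\pol{\flip{\maintemplate}}$.

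The reverse inclusion comes for free by running the same argument with the roles of $\maintemplate$ and $\flip{\maintemplate}$ swapped, noting that $\flip{\flip{\maintemplate}}=\maintemplate$. Concretely, given $f\in\pol{\flip{\maintemplate}}$, I define $g(x)=\neg f(\compl x)$; the first inclusion, applied to the template $\flip{\maintemplate}$ with minion $\pol{\flip{\maintemplate}}$, places $g$ in $\pol{\maintemplate}=\clo M$. The identity $f(x)=\neg g(\compl x)$ holds by the double negation, and so $f$ witnesses membership in $\flip{\clo M}$.

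I do not anticipate any real obstacle; the one thing to track carefully is that the negation operation is applied \emph{both} to the arguments of $g$ and to its output value in a consistent way, which the computation $\neg\bigl(\neg g(\compl x)\bigr)=g(\compl x)$ handles. Thus the proof amounts to unfolding the definitions and a few lines of symbol-pushing.
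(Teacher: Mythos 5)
Your argument is correct and is exactly the routine unfolding of the definitions that the paper has in mind: it states this as an immediate corollary of the two flip definitions and gives no separate proof, and your two-inclusion computation (flip a polymorphism to preserve the flipped pairs, then use that flipping templates and functions are involutions) is the intended verification. Nothing further is needed.
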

\begin{defn}
	Let $f : \set{0, 1}^L \rightarrow \set{0, 1}$. Function $f$ is folded, iff \[ \forall x \in \set{0, 1}^{L} \ f(x) = \neg f(\compl x) \]
\end{defn}
We provide following notations: 
\begin{itemize}
	\item For function $f$, $O_f(X)$ is image of set $X$ under $f$.
	\item For family of functions $F$, $O_F(X) = \bigcup_{f \in F} O_{f}(X)$. 
	\item For $I \subseteq \fullset{k}$, $I_k = \set{t: \text{t is k-ary tuple of Hamming weight }h,\ h \in I}$.
\end{itemize}

\begin{claim}\label{c:flipisok}
	Proposition~\ref{get-xor-at} holds for symmetric template \maintemplateM iff it holds for $ \flip{\maintemplateflatten} $.
\end{claim}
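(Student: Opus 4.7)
The plan is to verify that the flipping operation $(\relstr A,\relstr B)\mapsto \flip{(\relstr A,\relstr B)}$ preserves, one-by-one, every hypothesis and the conclusion of Proposition~\ref{get-xor-at}, passing freely through the correspondence $\pol{\flip{(\relstr A,\relstr B)}}=\flip{\pol{(\relstr A,\relstr B)}}$ given by the Corollary stated above. Because $\flip{\flip{(\relstr A,\relstr B)}}=(\relstr A,\relstr B)$, establishing one direction of the biconditional automatically gives the other. The task therefore reduces to showing that the predicates ``symmetric'', ``idempotent'', ``compatible with some $\arel$ and some $\crel$'', ``does not have bounded antichains'', and ``includes $\Xor$ or $\AT$'' are each invariant when we replace $\clo M$ by $\flip{\clo M}$.

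The structural hypotheses are handled by direct computation. Symmetry is immediate: flipping a symmetric relation $\rel{I,J,n}$ yields $\rel{\{n-i:i\in I\},\{n-j:j\in J\},n}$, which is again symmetric. Idempotence is preserved because $\flip{f}(0,\ldots,0)=\neg f(1,\ldots,1)=0$ and $\flip{f}(1,\ldots,1)=\neg f(0,\ldots,0)=1$ whenever $f$ is idempotent. For the $\arel/\crel$ hypothesis, the Hamming-weight bookkeeping above shows that, under flipping, a relational pair of ``$\arel$-shape'' turns into one of ``$\crel$-shape'' and conversely; hence ``compatible with some $\arel$ and some $\crel$'', taken as a conjunction, is preserved (the roles of the two witnesses simply swap).

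The bounded-antichain hypothesis is slightly more delicate and uses Proposition~\ref{antichain-equivalent}. From $\flip{f}(x)=\neg f(\compl x)$ one reads off that $U$ is a \oneset of $\flip f$ iff $f(\compl U)=0$ iff $U$ is a \zeroset of $f$, and symmetrically for \zerosets. Consequently antichains of \onesets of $\flip{\clo M}$ correspond bijectively to antichains of \zerosets of $\clo M$. Since we have already secured compatibility with both an $\arel$ and a $\crel$ for both minions, Proposition~\ref{antichain-equivalent} collapses bounded antichains of \onesets and bounded antichains of \zerosets into a single notion for each side, and the hypothesis transfers.

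Finally, the conclusion is preserved because both witness classes are self-dual under flipping. For an $n$-ary xor with $n$ odd, $\neg f(\compl x)=\neg\bigl(n\oplus\bigoplus_i x_i\bigr)=\neg\bigl(1\oplus f(x)\bigr)=f(x)$, so $\flip{\Xor}=\Xor$. For an alternating threshold with $n=2k+1$, the inequality $\sum_{i=1}^k(1-x_i)<\sum_{i=k+1}^n(1-x_i)$ rearranges via $n=2k+1$ to $\sum_{i=1}^k x_i\geq\sum_{i=k+1}^n x_i$, giving $f(\compl x)=\neg f(x)$ and hence $\flip{f}=f$, so $\flip{\AT}=\AT$. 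Combining the four paragraphs above yields the claimed equivalence. The main obstacle, and the only place where one has to do anything nontrivial, is the weight arithmetic in the last paragraph that certifies $\Xor$ and $\AT$ as self-dual; everything else is a straightforward book-keeping check or a direct invocation of Proposition~\ref{antichain-equivalent}.
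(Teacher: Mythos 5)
Your proposal is correct and follows essentially the same route as the paper: transfer the hypotheses under flipping (with the $\arel$/$\crel$ witnesses swapping roles and unboundedness of antichains passed through Proposition~\ref{antichain-equivalent}), and transfer the conclusion because $\Xor$ and $\AT$ are folded, hence invariant under $\flip{\cdot}$. You merely spell out some checks (symmetry, idempotence, the self-duality computations) that the paper leaves implicit, and note the involution property to get both directions, which matches the paper's argument in substance.
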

\begin{proof}$ $
	\begin{itemize}
		\item If \maintemplateM satisfies assumptions of our proposition, then $\flip{\maintemplateflatten}$ also does: \\
		$\flip{\clo M} = \pol{\flip{\maintemplateflatten}} $ is compatible with ${\rel{c,{1,\dotsc,c+1},c+1}}$ and ${\rel{1,{0,\dotsc,a},a+1}}$, 
		so compatibility with $some$ $\arel$ and $\crel$ holds ($a$ and $c$ swapped). Moreover, $\flip{\clo M}$ does not have bounded antichains by Proposition~\ref{antichain-equivalent}.
		\item Conclusion for $\flip{\maintemplateflatten}$ implies conclusion for \maintemplateM: both $\AT$ and $\Xor$ are folded, so they are preserved under $flip$.
	\end{itemize}

\end{proof}
From this Claim, later in the proof, we can flip $\Gamma$, and therefore $\clo M$ without loss of generality.

Assume that $\clo M$ does not have \AT. Our goal is to prove that with such assumption $\clo M$ has \Xor.
Recall Claim 4.5 from ~\cite{PCSPSoda}:
\begin{claim}\label{c:atorbit}
	Consider $k \ge 1$ then
	\begin{enumerate}
		\item $\orbits{\AT}{0}{k} = \srel{0}{k}$,
		\item $\orbits{\AT}{k}{k} = \srel{k}{k}$,
		\item $\orbits{\AT}{0, k}{k} = \srel{0, k}{k}$,
		\item $\orbits{\AT}{l}{k} = \srel{1, \ldots, k-1}{k}, k \ge 2, l \in \set{1, \ldots, k-1}$,
		\item $\orbits{\AT}{l_1, l_2}{k} = \fullseti{k}{k}$ if $k \ge 2$ and $\set{l_1, l_2} \neq \{0, k\}$.
	\end{enumerate}
\end{claim}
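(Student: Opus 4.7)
The plan is to verify each of the five items by proving both inclusions, using two basic facts about $\AT$: every $f\in\AT$ of arity $2k'+1$ is idempotent (since $f(x,\ldots,x)=0$ iff $k'x\geq(k'+1)x$ iff $x=0$), and $f$ carries an intrinsic asymmetry between the $k'$ variables on one side of the threshold and the $k'+1$ variables on the other. Items (1)--(3) fall out immediately: idempotency gives the $\supseteq$ inclusion in each of them, and for $\subseteq$ it suffices to observe that if every input tuple is all-$0$ or all-$1$, then the coordinate-wise input vector to $f$ is the same at every coordinate, so the output is constant; the constant value is then forced into $\set{0,k}$ in case (3), into $\set{0}$ in case (1), and into $\set{k}$ in case (2).

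The heart of the argument is item~(4). For the forward inclusion $\orbits{\AT}{l}{k}\subseteq\srel{1,\ldots,k-1}{k}$ I would use a global counting argument. Fix $f\in\AT$ of arity $n=2k'+1$ and weight-$l$ tuples $t_1,\ldots,t_n$, and set $L_j=\sum_{i=1}^{k'}t_i[j]$, $R_j=\sum_{i=k'+1}^{n}t_i[j]$. Then $\sum_j L_j = k'l$ and $\sum_j R_j = (k'+1)l$. An all-zero output would demand $L_j\geq R_j$ at every $j$, forcing $k'l\geq (k'+1)l$, impossible for $l\geq 1$. An all-one output would demand $R_j-L_j\geq 1$ at each of the $k$ coordinates, forcing $l=\sum_j(R_j-L_j)\geq k$, contradicting $l\leq k-1$. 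For the reverse inclusion, the task is constructive: for each target weight $l'\in\set{1,\ldots,k-1}$, exhibit an $f\in\AT$ and $n$ weight-$l$ tuples whose coordinate-wise image has weight exactly $l'$. Idempotency handles $l'=l$; to raise or lower the output weight I would combine ``aligned'' copies of a fixed weight-$l$ tuple with ``shifted'' basis-like copies, then use the $k'$-vs-$(k'+1)$ asymmetry to control, coordinate by coordinate, whether the output at that coordinate is $0$ or $1$, tuning the arity $n$ so that the total number of $1$-coordinates equals $l'$.

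Item~(5) then follows by mixing: whenever $\set{l_1,l_2}\neq\set{0,k}$, at least one of the two weights lies in $\set{1,\ldots,k-1}$, so item~(4) already yields every weight in $\set{1,\ldots,k-1}$; feeding in additional inputs of weight $0$ or $k$ (available because the pair contains a second, distinct weight) rebalances the two sides of the threshold and extends the achievable output weights down to $0$ and up to $k$, giving $\fullseti{k}{k}$. The main obstacle I anticipate is the constructive reverse direction in~(4): the forward counting is clean and short, but producing explicit witnesses for every intermediate target weight requires a careful choice of the $\AT$ arity together with a specific arrangement of weight-$l$ tuples, and verification is a routine but slightly intricate case analysis rather than a one-line count.
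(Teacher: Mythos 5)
Note first that the paper does not prove this claim at all: it is quoted as Claim~4.5 of Brakensiek and Guruswami~\cite{PCSPSoda}, so there is no in-paper argument to compare against. On the merits, your items (1)--(3) are fine, and your counting argument for the forward inclusion of (4) (summing $L_j$ and $R_j$ over the $k'$ left and $k'+1$ right blocks to rule out the all-zero and all-one outputs) is correct and is exactly the right tool for that half.

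The genuine gaps are in the other half. For (4), the containment $\srel{1,\ldots,k-1}{k}\subseteq\orbits{\AT}{l}{k}$ is precisely the direction that the paper's later use of the claim (e.g.\ in Lemma~\ref{l:anisat}) leans on, and you never actually produce the witnesses: ``combine aligned copies with shifted basis-like copies, tuning the arity'' is a plan, not a proof, and you concede as much in your closing paragraph; until the arity and the explicit arrangement of weight-$l$ tuples realizing an arbitrary target weight $l'$ are written down and checked, the substantive half of (4) is missing. For (5) your argument is wrong as stated: you justify reaching output weights $0$ and $k$ by ``feeding in additional inputs of weight $0$ or $k$ (available because the pair contains a second, distinct weight)'', but the second weight need not be extremal --- for $\set{l_1,l_2}=\set{1,2}$ and $k=5$ there are no inputs of weight $0$ or $k$ at all. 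When both $l_1<l_2$ are interior, the correct mechanism is to exploit the weight imbalance directly: to force the all-one output, put weight-$l_2$ tuples in the $k'+1$ right slots and weight-$l_1$ tuples in the $k'$ left slots, so that $\sum_j(R_j-L_j)=(k'+1)l_2-k'l_1$ can be made at least $k$ by taking $k'$ large, and then distribute the ones so that every coordinate has $R_j>L_j$; dually, with the heavier tuples on the left, one forces the all-zero output. This again calls for an explicit coordinate-wise construction, not just the aggregate count, so both (4)$\supseteq$ and (5) still need the concrete witnesses your sketch defers.
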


\begin{lemma}[If no AT]\label{l:if-no-at}
	Let \maintemplateM symmetric idempotent template incompatible with \AT. $\clo M = \pol{\maintemplateflatten}$. Then $\clo M$ or $\flip{\clo M}$ is compatible with
	\noextermes or \middlegap.
\end{lemma}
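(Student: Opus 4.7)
The plan is to locate, in the language of $(\relstr A,\relstr B)$, an explicit relational pair $\rel{I,J,k}$ with which $\AT$ is incompatible, and then to transport that pair to one of the two target shapes using strict relaxation together with the pp-definitions from Section~\ref{sect:notation} (reducing arity with $\rel{0,0,1}$ and shifting weights down with $\rel{1,1,1}$). Claim~\ref{c:flipisok} lets me flip $(\relstr A,\relstr B)$ at any point, so the normalisations below cost nothing.

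By Claim~\ref{c:atorbit}, together with the idempotency constraints $0\in I\Rightarrow 0\in J$ and $k\in I\Rightarrow k\in J$, the incompatibility of $\AT$ with $\rel{I,J,k}$ forces one of two scenarios. Either (a) some $i\in I\cap\{1,\dots,k-1\}$ witnesses a missing interior weight $j\in\{1,\dots,k-1\}\setminus J$, so strict relaxation produces $\rel{\{i\},\{0,\dots,k\}\setminus\{j\},k}$; or (b) every singleton strict relaxation is $\AT$-compatible, which by item~4 of Claim~\ref{c:atorbit} forces $\{1,\dots,k-1\}\subseteq J$, so the only missing witness is $0$ or $k$, and $I$ (after shrinking using the full orbit of item~5) is a two-element set containing at least one interior weight.

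For case (a), after flipping to arrange $i\le j$, applying the pp-definition with $\rel{1,1,1}$ exactly $i-1$ times produces $\rel{\{1\},\{0,\dots,k-i+1\}\setminus\{j-i+1\},k-i+1}$, and then $k-j-1$ applications of the $\rel{0,0,1}$ pp-definition shrink the arity to $n=j-i+2$ with the missing weight landing at position $n-1$, yielding $\rel{1,\{0,\dots,n-2,n\},n}$, which is \middlegap. For case (b), after flipping so that the missing extreme is $k$, we have $I=\{l_1,l_2\}$ with $l_1<l_2<k$, $\{1,\dots,k-1\}\subseteq J$, and $k\notin J$; a strict relaxation trims $J$ to $\{0,\dots,k-1\}$, and then $l_1$ iterations of the $\rel{1,1,1}$ pp-definition transform the pair to $\rel{\{0,l_2-l_1\},\{0,\dots,k-l_1-1\},k-l_1}$, which is \noextermes with $d=l_2-l_1<n=k-l_1$.

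The main obstacle is verifying that the case split is exhaustive and that case (b) really lands in \noextermes rather than degenerating. The orbit of $\{0,k\}$ under $\AT$ is $\{0,k\}$ (item~3 of Claim~\ref{c:atorbit}), which together with idempotency makes pairs with $I=\{0,k\}$ automatically $\AT$-compatible, so whenever (a) fails the surviving two-element $I$ must contain at least one element of $\{1,\dots,k-1\}$; flipping if necessary converts the remaining ambiguity into the normal form above. The bookkeeping for how $I$ and $J$ evolve under iterated $\rel{1,1,1}$ and $\rel{0,0,1}$ is routine and follows directly from the explicit shift-and-truncate descriptions in Section~\ref{sect:notation}.
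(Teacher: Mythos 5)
Your proof is correct and takes essentially the same route as the paper: pick an \AT-incompatible pair, combine the orbit description of Claim~\ref{c:atorbit} with idempotency, and transport the witness to \middlegap or \noextermes via flipping, strict relaxation, and the pp-definitions with $\rel{0,0,1}$ and $\rel{1,1,1}$. Your only real deviation is that where the paper dismisses its second case as ``almost identical'', you route the situation of an interior missing weight (even when $I$ is not a singleton) into the \middlegap branch and reserve \noextermes for a missing extreme weight, which is in fact the cleaner handling, since \noextermes is not directly reachable when the missing weight lies strictly between the two retained elements of $I$.
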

Note: Lemma 4.4 of~\cite{PCSPSoda} is a version of this lemma for folded case (template compatible with folded minion in assumption, "$\clo M$" instead of "$\clo M$ or $\flip{ \clo M}$" in conclusion). In fact, proof of this lemma is identical with accuracy to one step: in case of Lemma 4.4 of~\cite{PCSPSoda} they can flip \maintemplateM and still be compatible with $\clo{M}$, because $\clo{M}$ was folded. In our case if we flip \maintemplateM we flip $\clo{M}$, so we need "$\clo M$ or $\flip \mainminion$" in our conclusion. We provide very brief proof of this lemma, for more details we refer reader to Lemma 4.4 of~\cite{PCSPSoda}.
\begin{proof}
	Take $\rel{I, J, k}$ from language of \maintemplateM incompatible with \AT. So $O_{AT}(I) \not\subseteq J $. By Claim \ref{c:atorbit}, one of the following cases holds:
	\begin{enumerate}
		\item  $ I = \srel{l}{k}, k \ge 2, l \in \set{1, \ldots, k-1} \land \srel{1, \ldots, k-1}{k} \not\subseteq J$ \\
		in which we conclude that  $\clo M$ or $\flip \mainminion$ is compatible with \middlegap.
		\item  $\srel{l_1, l_2}{k} \subseteq I, k \ge 2, \{l_1, l_2\} \neq \{0, k\} \land \fullset{k} \not\subseteq J$
		in which we conclude that  $\clo M$ or $\flip \mainminion$ is compatible with \noextermes.
	\end{enumerate}
	We prove only first case, as reasoning in second is almost identical.
	There is ${b \in \set{1, \ldots, k-1} \setminus \set l}$ such that $b \not\in J$. Now, there are 2 cases: $b > l$ or $b < l$. So modulo flip $b > l$.
	Now using standard relaxations we obtain that $\clo M$ or $\flip \mainminion$ is compatible with \mbox{\middlegap.}
	
\end{proof}
From Lemma~\ref{l:if-no-at} and Claim~\ref{c:flipisok} we can assume that $\clo M$ is compatible with \middlegap or \noextermes.
Choose maximal $F\subseteq \clo M$ such that for every $f\in F$ we have \[f(1,0,\dotsc,0)=f(0,1,0,\dotsc,0)=\dotsb=f(0,\dotsc,0,1,0)=1\]
Such functions of arbitrarily large arity can be obtained by taking a function with sufficiently large antichain of \onesets and identifying variables inside each \oneset from this antichain.
For every function with such property we can obtain a function with arbitrary smaller arity and same property by identifying last positions. Therefore $F$ contains functions of all arities.

Choose maximal $G\subseteq F$ such that for every $g\in G$ we have \[g(1,0,\dotsc,0)=g(0,1,0,\dotsc,0)=\dotsb=g(0,\dotsc,0,1)=1\]

\noindent Now reasoning split into 2 cases, dependent on arities of functions from $G$
\begin{description}
	\item{CASE 1:}\label{c:Gbound} Arities of functions from $G$ are bounded by a constant $M$ - results in obtaining \AT which leads to contradiction
	\item{CASE 2:}\label{c:Gunbound} $G$ contains functions of arbitrarily large arity - results in obtaining \Xor
\end{description}

\subsection{CASE: Arities of functions from $G$ are bounded by a constant $M$}\label{a:boundedarities}

Take any $f\in F$ with sufficiently large arity,
partition variables into sets of size $M$ with the last partition possibly smaller.
We identify variables in each set to obtain $f'(x_1,\dotsc,x_n,x_{n+1})$
and note that if $U\neq \{1,\dotsc,n+1\}$ and $(n+1)\in U$ then $f'(U) = 0$~%
(otherwise modifying $f$ by identifying variables in $U$,
after unfolding the partition,
we get a contradiction with the choice of $M$).

Our goal is to obtain almost negations of all arities.
An {\em almost negation} is function of arity $m$ defined as:
\[
f(U) =
\begin{cases}
1&\text{ if } U = \{1,\dotsc,m\} \\
0&\text{ if } U = \emptyset \\
1-x_m&\text{ else.}
\end{cases}
\]
We provide such notation: $\AN$ is family of almost negations of all arities, $\AN_m$ is almost negation of arity m.

Note that the reasoning in previous paragraph provides~(in $\clo M$) functions of all arities
satisfying the condition of being $\AN_m$ whenever $x_m=1$. Consider the $f'$ from few paragraphs above. Function $f'$ is compatible with $\arel$ and therefore a union of $a$-many pairwise disjoint \zerosets is a \oneset (by Lemma \ref{onesettozeroset}).
Then if $(n+1)\notin U$ and $|U|\geq a$ then $f'(U) = 1$, because any nonempty subset of $\{1, \dots, n\}$ is a \zeroset (by a property established few paragraphs above) so we can split $U$ into $a$ non-empty disjoint \zerosets.
That means that identifying variables into sets of size $a$ (to get rid of $|U| \ge a$ constraint), with the last set of possibly different size,
we obtain an almost negation.
As $F$ had functions of arbitrary arity, we obtain almost negations of arbitrarily large arities and~%
(by identifying variables) of arbitrary arities in the end.
This case is concluded by the following lemma.
\begin{lemma}\label{l:anisat}
	Let $\relstr A,\relstr B$ be a symmetric template.
	$\relstr A,\relstr B$ is compatible with $\AN$ iff it is compatible with $\AT$.
\end{lemma}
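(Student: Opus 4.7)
The plan is to use the fact that, since $(\relstr A, \relstr B)$ is symmetric, compatibility with a family of operations depends only on the family's orbit structure on Hamming weights: an operation $f$ is compatible with a symmetric pair $\rel{I,J,k}$ precisely when, for every choice of input tuples with weights in $I$, the weight of the $f$-output lies in $J$. Hence the lemma reduces to showing $\orbits{\AT}{L}{k} = \orbits{\AN}{L}{k}$ for every $k$ and every multiset $L$ of weights.

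For the direction $\AT \Rightarrow \AN$, I will point to the minor identity
\[
  \AN_{k+1}(x_1, \dotsc, x_{k+1}) \;=\; \AT_{2k+1}\bigl(\underbrace{x_{k+1}, \dotsc, x_{k+1}}_{k \text{ copies}}, x_1, x_2, \dotsc, x_{k+1}\bigr),
\]
which is checked by a short case split on $x_{k+1}$: when $x_{k+1} = 0$ the right-hand side reduces to the disjunction of $x_1, \dotsc, x_k$ and when $x_{k+1} = 1$ it reduces to their conjunction, which is exactly the defining behavior of $\AN_{k+1}$. Since $\pol{\relstr A,\relstr B}$ is closed under minors, $\AT \subseteq \pol{\relstr A,\relstr B}$ forces $\AN_m \in \pol{\relstr A,\relstr B}$ for every $m \geq 2$ (the case $m = 1$ is trivial because $\AN_1$ is the identity). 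At the orbit level this already gives $\orbits{\AN}{L}{k} \subseteq \orbits{\AT}{L}{k}$.

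For the direction $\AN \Rightarrow \AT$, I will establish $\orbits{\AT}{L}{k} \subseteq \orbits{\AN}{L}{k}$ by following the case split of Claim~\ref{c:atorbit}. The extremal cases $L \in \set{\set 0, \set k, \set{0, k}}$ are immediate, because on inputs built from $\vec 0$'s and $\vec 1$'s every column of $\AN_m$ is constant, so the output itself is $\vec 0$ or $\vec 1$. For a single weight $l$ with $1 \leq l \leq k - 1$, I will first argue that output weights $0$ and $k$ cannot occur (both cases force all input tuples to coincide, contradicting $0 < l < k$), and then realize each intermediate weight $w$ by choosing the arity $m$ and the supports $p_1, \dotsc, p_m$ of the input tuples so that exactly $w$ columns are mixed with $t_m$ vanishing on those coordinates. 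For pairs $\set{l_1, l_2} \neq \set{0, k}$ with $k \geq 2$, the additional freedom of two distinct weights allows realization of the remaining output weights $0$ and $k$ by mixing tuples whose supports cover, or avoid, the support of $t_m$ appropriately.

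The main obstacle is the constructive half of the $\AN$-analogue of Claim~\ref{c:atorbit} in cases (4) and (5): for each target output weight one must write down an arity $m$ and a configuration of input-tuple supports whose $\AN_m$-value has the required Hamming weight. The combinatorics is slightly less uniform than for $\AT$ because $\AN_m$ singles out its last coordinate, but the freedom in choosing $m$ and the supports is ample, and the constructions are in the same spirit as those used in Claim~\ref{c:atorbit} of \cite{PCSPSoda}.
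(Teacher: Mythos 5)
Your easy direction and your overall reduction are sound and essentially the paper's own: the minor identity you give is (up to indexing) exactly the identity the paper uses to show every almost negation is a minor of an alternating threshold, and your plan for the converse — reduce to showing $O_{\AT}(I)\subseteq O_{\AN}(I)$ for symmetric $I$ via the case split of Claim~\ref{c:atorbit}, with the extremal cases handled by idempotency — is the paper's plan as well. (Your extra argument that output weights $0$ and $k$ cannot occur in case (4) is correct but superfluous: once the minor-identity direction is in place, only the inclusion $O_{\AT}(I)\subseteq O_{\AN}(I)$ is needed, not orbit equality.)

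The genuine gap is that the constructive half of cases (4) and (5) — which is essentially the entire content of the paper's proof of the nontrivial direction — is left as an assertion that ``the freedom in choosing $m$ and the supports is ample.'' The difficulty is not deciding which output coordinates should be $1$ (all-ones rows, or nonzero rows where the special last argument is $0$); it is arranging this while \emph{every input tuple has exactly the prescribed Hamming weight} ($b$ in case (4), $l_1$ and $l_2$ in case (5)), and your sketch never engages with that constraint. The paper discharges it with explicit schemes: in case (4) it evaluates an almost negation of arity $2k+1$ on a matrix made of $b-1$ all-ones rows, the singleton rows $\{1\},\dotsc,\{d\}$ with $d=r-b+1$, one row equal to the complement of $\{1,\dotsc,d\}$ (which contains the special last coordinate and hence evaluates to $0$), and $k-r-1$ empty rows; a count shows every column then has weight exactly $b$ and the output has weight $r$, for each $r$ with $b\le r\le k-1$, and the remaining weights $1,\dotsc,b-1$ are obtained by complementing everything, using that almost negations are folded (self-dual) — a step your plan does not mention. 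Case (5) similarly requires a concrete configuration (the paper uses a cyclic family of weight-$l_2$ tuples together with one weight-$l_1$ tuple placed in the special last argument to force the all-ones output, and foldedness again for the all-zeros output). Until you supply such constructions, or a replacement argument that verifies realizability under the exact-weight constraints, the direction ``compatible with $\AN$ implies compatible with $\AT$'' is asserted rather than proved.
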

\begin{proof}
	$\impliedby$:\\
	Let $at$ denote the alternating threshold of arity $2k +1$.
	\[\AN_{k+2}(x_1, \ldots, x_{k+2}) = at(\underbrace{x_{k+2}, \ldots, x_{k+2}}_{\text{k times}}, x_1 \ldots, x_{k+1})\]
	So $\AN_{k+2}$ is minor of $at$, and since each $\AN_{k}$ is minor of $\AN_{k+1}$, then our conclusion holds.
	$\implies$:\\
	All we need to prove is that $O_{\AT}(I) \subseteq O_{\AN}(I)$ for each symmetric relation I.
	Fix I. Let $k \ge 1$ be arity of this relation.
	Then from Claim \ref{c:atorbit}:
	\begin{numcases}{O_{\AT}(I)=}
	\set{0}_k & $I = \set{0}_k$ \\
	\set{k}_k & $I = \set{k}_k$ \\
	\{0, k\}_k & $I = \{0, k\}_k$\\
	\{1, \ldots k-1\}_k &$ I = \{l\}, k \ge 2,  l \in \{1, \ldots k-1\}$ \label{c:oneell} \\
	\fullseti{k}{k} & $\{l_1, l_2\}_k \subseteq I, k \ge 2, \{l_1, l_2\} \neq \{0, k\} \label{c:twoel}$
	
	\end{numcases}
	Cases 1 - 3 trivially holds for any idempotent function. Now:
	\begin{itemize}
		\item if our thesis holds for (\ref{c:oneell}), then in (\ref{c:twoel}) it is sufficient to show that $ \srel{0, k}{k} \subseteq O_{\AN}(I)$
		\item we can reduce (\ref{c:twoel}) to case, when $I = \set{l_1, l_2}$
	\end{itemize}
	Thus it is sufficient to show that:
	\begin{enumerate}
		\item $\srel{1, \ldots, k-1}{k} \subseteq O_{\AN}(I)$ for $ I = \srel{b}{k}, k \ge 2, b \in \set{1, \ldots, k-1} $
		\item $\srel{0, k}{k} \subseteq O_{\AN}(I)$  for $I = \{l_1, l_2\}_k, k \ge 2, \{l_1, l_2\} \neq \{0, k\}$
	\end{enumerate}
	
	\noindent CASE 1:
	Choose function $f \in$ \AN of arity $n = 2k +1 $.
	Fix any $r$ such that $b \le r < k$. Let $d = r - b + 1$, $U = \compl{\set{1, \ldots, d}}$, then: 
	\[\bigl(\applyshort{f}{1, \ldots, n}{b-1},\ f(\set 1),\ \ldots\ ,\ f(\set{d}),\ f(U),\ \applyshort{f}{\emptyset}{k-r-1}\bigr)\]
	is tuple of Hamming $r$. Producing tuples of Hamming weight less than $b$ and at least $1$ is symmetric, since almost negation is folded.
	Example of this reasoning shows Figure \ref{f:anis}.
	\centfig{\anisat}{Example of producing $ 6_9 $-tuple from $ 3_9 $-tuples by $\AN_9$}{f:anis}

	\noindent CASE 2: \\
	Assume $l_1 < l_2$.
	Let: $d = k - l_1$,  $w = l_2 - l_1$, 	$t_1,  \ldots t_d$ be $k-ary$ tuples of Hamming weight $l_2$, such that (numbering of positions from zero):
	\[
	t_1(i) =
	\begin{cases}
	0&\text{ if } i \in [0, d - w - 1]\\
	1&\text{ else.}
	\end{cases}
	\]
	And for $1 < j \le d$
	\[
	t_j(i) =
	\begin{cases}
	t_{j-1}[i - 1 \mod d]&\text{ if } i < d\\
	1&\text{ else.}
	\end{cases}
	\]
	Let $r$ be $k-ary$ tuples of Hamming weight $l_1$, such that:
	\[
	r(i) =
	\begin{cases}
	0&\text{ if } i \in [0, d-1]\\
	1&\text{ else.}
	\end{cases}
	\]
	Choose function $f \in$ \AN of arity $d+1$.
	Then $f(t_1, \ldots, t_d, r)$ = $(1, \ldots, 1)$. 
	From the fact that almost negation is folded, obtaining $(0, \ldots, 0)$ is symmetric.
	Example of this reasoning shows Figure \ref{f:aniss}.
	\centfig{\anisatxd}{Example of producing $ 6_6 $-tuple from $\set{4_6, 2_6 }$-tuples by almost negation of arity 5}{f:aniss}

\end{proof}

\subsection{CASE: Arities of functions from G are unbounded}\label{a:unboundedarities}
First goal is to show that there is some constant $e$ such that every g in $G$ is \flippable{e}.

\begin{defn}
	A $k$-ary function $g$ is \oneflippable{e} iff for every $U$, such that $g(U) = 1$ and $k-e > |U| > e$ and for every $x\not\in U$, $y \in U$:
	\[g(U \setminus \set{y}) = g(U\cup\{x\}) = 0\]
\end{defn}
\begin{defn}
	A $k$-ary function $g$ is \zeroflippable{e} iff for every $U$, such that $g(U) = 0$ and $k-e > |U| > e$ and for every $x\not\in U$, $y \in U$:
	\[g(U \setminus \set{y}) = g(U\cup\{x\}) = 1\]
\end{defn}
\begin{defn}
	A function $g$ is \flippable{e} iff it is both \zeroflippable{e} and \oneflippable{e}.
\end{defn}
\begin{corollary}
	$k$-ary function $g$ is \flippable{e} iff for every $U$, such that $k-e > |U| > e$ and coordinates $x\not\in U$, $y \in U$:
	\[g(U \setminus \set{y}) \not= g(U) \not=g(U\cup\{x\})\]
\end{corollary}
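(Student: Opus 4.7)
The statement is an equivalence between two ways of writing the same condition on a Boolean function, so the plan is simply to unpack the definitions of \zeroflippable{e}, \oneflippable{e}, and \flippable{e}, and to observe that the displayed chain of inequalities in the corollary exactly bundles them into a single case-free statement. Since Boolean functions take only two values, asserting $g(U\setminus\{y\}) \neq g(U)$ is the same as asserting $g(U\setminus\{y\}) = 1-g(U)$, and similarly for $g(U\cup\{x\})$.

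For the forward direction, I would fix $U$ with $k-e > |U| > e$ together with $x\notin U$ and $y\in U$, and split on the value of $g(U)$. If $g(U)=1$, then \oneflippable{e} gives $g(U\setminus\{y\}) = g(U\cup\{x\}) = 0$, which is precisely $g(U\setminus\{y\})\neq g(U)\neq g(U\cup\{x\})$; the case $g(U)=0$ is symmetric and uses \zeroflippable{e}.

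For the reverse direction, I would start from any $U$ witnessing the hypothesis of \oneflippable{e}, i.e.\ with $g(U)=1$ and $k-e > |U| > e$, and apply the displayed condition to conclude that $g(U\setminus\{y\})$ and $g(U\cup\{x\})$ both differ from $1$, hence equal $0$; this is the conclusion of \oneflippable{e}. The analogous argument starting from $g(U)=0$ yields \zeroflippable{e}, and together these give \flippable{e}.

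There is no real obstacle here: the whole content is that over $\{0,1\}$, ``differs from $g(U)$'' is the same information as ``equals $1-g(U)$'', so the two-case formulation of flippability collapses to the single inequality chain. The only thing to be careful about is that the bounds $k-e > |U| > e$, and the side conditions $x\notin U$, $y\in U$, are identical on both sides of the equivalence, so no quantifier reshuffling is needed.
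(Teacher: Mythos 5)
Your proof is correct and is exactly the routine unpacking of the definitions of \oneflippable{e} and \zeroflippable{e} that the paper itself relies on (the corollary is stated there without proof precisely because, over $\{0,1\}$, ``differs from $g(U)$'' is the same as ``equals $1-g(U)$''). Nothing further is needed.
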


Recall that $ G $ is compatible with \middlegap or \noextermes. Firstly we show that each $g \in G$ is \oneflippable{n}. To do that we will prove following two lemmas:

\begin{lemm}\label{l:oneflippablefirstcase}
	Let g be k-ary idempotent function, where $k > 2n$ such that:
	\[g(1,0,\dotsc,0)=g(0,1,0,\dotsc,0)=\dotsb=g(0,\dotsc,0,0,1)=1\]
	if $g$ is compatible with \middlegap then $g$ is \oneflippable{(n-1)}.
\end{lemm}
\begin{proof}

	Fix \oneset $U$  of g such that $ k - (n-1) > |U| > n-1$. Reasoning splits into 2 cases:
	\begin{itemize}
		\item CASE 1: $f(U \setminus \set{i}) \stackrel{?}{=} 0$ for all $i \in U$
		\item CASE 2: $f(U \cup \set{j}) \stackrel{?}{=} 0$ for all $j \not\in U$
	\end{itemize}
	\begin{proof}[Proof of CASE 1 ($U \setminus \set{i}$)]\hfill\\
		Without loss of generality assume that $i = n-2$ and  it is the first element of set $U$. Let $R = \compl{U \cup \set{1,\dotsc,n-3}}$.\\
		$g(R) = 0$ because otherwise tuple:
		\[\bigl(g(\set{1}),\ g(\set{2}),\ \dotsc,\ g(\set{n-3}),\ g(\emptyset),\ g(U),\ g(R)\bigl)\]
		 have Hamming weight $n-1$ which contradicts compatibility with \middlegap.
		 Removing $n-2$ from from $U$ and adding it to empty set and applying $g$ will produce tuple:
		 \[\bigl(g(\set{1}),\ g(\set{2}),\ \dotsc,\ g(\set{n-3}),\ g(\set{n-2}),\ g(U \setminus \set{n-2}),\ g(R)\bigr)\]
		\noindent which implies that $g(U \setminus \set{n-2}) = 1$, because otherwise this tuple would have Hamming weight $n-1$ which contradicts compatibility with \middlegap.
		Example of this reasoning is shown in Figure \ref{fig:flipping0}.
			\begin{figure}[h]
			\begin{center}
				\flippingonetable
			\end{center}
			\caption{Scheme for forcing $f(U \setminus \set{i})=0$ when $f(U) = 1 \land i \in U$.\\
			Red cells in first table are $ U $, in second are $ U \setminus \set{i} $, brown cell is $ \set i $, gray cells in last row are $ R $, gray cells on diagonal are $\set{1}, \ldots, \set{n-3}$, yellow cells represent result of applying $ g $.
			}
			\label{fig:flipping0}
		\end{figure}
	\end{proof}
	\begin{proof}[Proof of CASE 2 ($U \cup \set{j}$)]\hfill\\
		Without loss of generality assume that $j = n-2$ and all elements of $U$ are greater than $n-2$. let $R = \compl{U \cup \set{1,\dotsc,n-3}}$.\\
		$g(R) = 1$ because otherwise tuple:
		\[\bigl(g(\set{1}),\ g(\set{2}),\ \dotsc,\ g(\set{n-3}),\ g(\set{n-2}),\ g(U),\ g(R)\bigr)\]
		have Hamming weight $n-1$ which contradicts compatibility with \middlegap.
		Removing $n-2$ from it's singleton and adding it to $U$ and applying $g$ will produce  tuple:
		\[\bigl(g(\set{1}),\ g(\set{2}),\ \dotsc,\ g(\set{n-3}),\ g(\emptyset),\ g(U \cup \set{n-2}),\ g(R)\bigr)\]
		\noindent which implies that $g(U \cup \set{n-2}) = 0$, because otherwise this tuple would have Hamming weight $n-1$ which contradicts compatibility with \middlegap.
		Example of this reasoning is shown in Figure \ref{fig:flipping1}.
			\begin{figure}[h]
			\begin{center}
				\flippingzerotable
			\end{center}
			\caption{Scheme for forcing $f(U \cup \set{j})=0$ when $f(U) = 1 \land j \notin U$.\\
				Red cells in first table are $ U $, brown cell is $ \set j $, gray cells in last row are $ R $, gray cells on diagonal are $\set{1}, \ldots, \set{n-2}$, yellow cells represent result of applying $ g $.
			}
			\label{fig:flipping1}
		\end{figure}
	\end{proof}
	So since we prove that in both cases $g(U \cup \set{j}) = g(U \setminus \set{i})=0$ then g is \oneflippable{(n-1)}.
	\end{proof}

	\begin{lemm}\label{l:oneflippablesecondcase}
		Let g be k-ary idempotent function, where $k > 2n$ such that:
		\[g(1,0,\dotsc,0)=g(0,1,0,\dotsc,0)=\dotsb=g(0,\dotsc,0,0,1)=1\]
		if $g$ is compatible with \noextermes then $g$ is \oneflippable{$\frac{n}{d}$}.
	\end{lemm}
	\begin{proof}
	Notice that $d\nmid n$, because otherwise we can produce tuple
	\[\bigl(\applyshort{g}{1}{d},\ \applyshort{g}{2}{d},\ \dotsb,\ \applyshort{g}{\frac{n}{d}}{d}\bigr)\]
	containing only ones, which contradict compatibility with \noextermes.\\
	Let $s = \floor{n/d}, r = n \ \text{mod } d$. Now our proof will split into same cases as in
	Lemma~\ref{l:oneflippablefirstcase}:
	\begin{itemize}
		\item CASE 1: $f(U \setminus \set{i}) \stackrel{?}{=} 0$ for all $i \in U$
		\item CASE 2: $f(U \cup \set{j}) \stackrel{?}{=} 0$ for all $j \not\in U$
	\end{itemize}
	\begin{proof}[Proof of CASE 1 ($U \setminus \set{i}$)]\hfill\\
		Without loss of generality, assume that $i = s$ and all elements of $U$ are not smaller than $s$. Now we can produce tuple:
		\[\bigl(\applyshort{g}{1}{d},\ \dotsb,\ \applyshort{g}{s-1}{d},\ \applyshort{g}{s}{r},\ \applyset{g}{U}{d-r},\ \applyset{g}{U \setminus \set{s}}{r}\bigr)\]
		Notice that all elements of this tuple, except $g(U \setminus \set{s})$ evaluate to 1, so $g(U \setminus \set{s}) = 0$, because otherwise we break compatibility with \noextermes.
		See Figure~\ref{fig:noextremesflipzero} for example.
		\begin{figure}[h]
			\begin{center}
				\noextremesflipone
			\end{center}
			\caption{Scheme for forcing $f(U \setminus \set{i})=0$ when $f(U) = 1 \land i \in U$.\\
			Red cells in first rows are $U$, red cells in last row are $ U \setminus \set i $, brown cell represents lack of $ \set i $, gray cells in first rows are 
			$ \set 1 , \ldots, \set{s-1} $, lonely gray cell is $ \set s $, yellow cells are result of applying $ g $.
			}
			\label{fig:noextremesflipzero}
		\end{figure}
	\end{proof}

	\begin{proof}[Proof of CASE 2 ($U \cup \set{j}$)]\hfill\\
		Without loss of generality, assume that $j = s$ and all elements of $U$ are larger  than $s$. Now we can produce tuple:
		\[\bigl(\applyshort{g}{1}{d},\ \dotsb,\ \applyshort{g}{s-1}{d},\ \applyshort{g}{s}{r},\ \applyset{g}{U \cup \set{s}}{d-r},\ \applyset{g}{U}{r}\bigr)\]
		Notice that all elements of this tuple, except $g(U \cup \set{s})$ evaluate to 1, so $g(U \cup \set{s}) = 0$, because otherwise we break compatibility with \noextermes.
		See Figure~\ref{fig:noextremesflipone} for example.
		\begin{figure}[h]
			\begin{center}
				\noextremesflipzero
			\end{center}
			\caption{Scheme for forcing $f(U \cup \set{j})=0$ when $f(U) = 1 \land j \notin U$.\\
				Red cells are $ U $, brown cell is $ \set j $, gray cells in first rows are 
				$ \set 1 , \ldots, \set{s-1} $, lonely gray cell is $ \set s $, yellow cells are result of applying $ g $.
			}
			\label{fig:noextremesflipone}
		\end{figure}
	\end{proof}

\end{proof}

From lemmas \ref{l:oneflippablefirstcase} and \ref{l:oneflippablesecondcase} we obtain that each $g \in G$ is \oneflippable{n}. Applying this result to following lemma gives us that each $g \in G$ is \flippable{e}, where $e = n + c(a-1)$.  

\begin{lemm}\label{l:zeroflippable}
	Let g: k-ary \oneflippable{n} function, such that every $U$ of size c is a \zeroset of g. If g is compatible with $\arel$ then g is \flippable{e}, where $e = n + c(a-1)$.
\end{lemm}
\begin{proof} Our aim is to show that $g$ is \zeroflippable{e}, then our conclusion follows immediately.
	Let $U$ be a \zeroset such that $\flippableassumption{U}{k}$, $V$ be a set such that $U, V$ differs on one element $i$. 
	If we show that $\compl V$ is a \oneset, then g is \oneflippable{e}.\\ 
	Pick $S_1, ..., S_{a-1} \subseteq{\compl{U \cup V}}$ - disjoint sets of size c. It is possible, because $|\compl{U \cup V}| \ge c(a-1)$. Let $W = S_1 \cup ... \cup S_{a-1} \cup U$. Note that $S_1, ..., S_{a-1}$ are \zerosets because they have size $c$.
	From compatibility with $\arel$ union of $a$ disjoint \zerosets is a \oneset (by Lemma \ref{onesettozeroset}), so $W$ is \oneset. \\
	Let $W' = S_1 \cup ... \cup S_{a-1} \cup V$. Now:
	\begin{itemize}
		\item $W$ is a \oneset
		\item $|W| = |U| + c \cdot (a-1)) > n$
		\item $|\compl{W}| = |\compl{U}| - c(a-1) > n$
		\item $W'$ and $W$ differs only on $i$.
		\item $g$ is \oneflippable{n}
	\end{itemize}
	So $\compl{W'}$ is \zeroset.
	$\compl{V} = S_1 \cup ... \cup S_{a-1} \cup \compl{W'}$, thus $\compl{V}$ is an union of $a$ disjoint \zerosets, so it is a \oneset.\\
	See Figure~\ref{fig:zeroflipping} for example where $|V| > |U|$.

	\begin{figure}[h]
		\begin{center}
			\tablexda
		\end{center}
		\caption{Scheme for forcing $f(\compl V)=1$\\
			Red cells are $U$, brown cell is $i$, gray zeros are \iterm{S}{1}{a-1}, gray ones are $W$,  yellow cells are result of applying $g$.}
		\label{fig:zeroflipping}
	\end{figure}

\end{proof}
To finish proof of CASE 2 of Proposition~\ref{get-xor-at} we prove following two lemmas:

\newcommand{\callplussize}[2]{{#1}(#2) + |#2|}
\begin{lemm}\label{l:flip}
	
	Let g: \flippable{e} n-ary function such that $n \ge 3(e + 1)$
	Let U, V: sets such that \flippableassumptionM{U}{n} and \flippableassumptionM{V}{n}\\
	Then $\callonset{g}{U} = \callonset{g}{V} \iff |U| \equiv_2 |V|$.
\end{lemm}
\begin{proof}
	Conclusion in our Lemma is equivalent to: $\callplussize{g}{U} \equiv_2 \callplussize{g}{V}$.
	Case, when $V \subseteq U$ is easy - induction on Hamming distance between $U$ and $V$.\\
	\noindent CASE $|U \cup V| < n - e$:\\
	Because $|U \cup V| > e$ and our thesis holds if one set is subset of another, then:
	\[
	\callplussize{g}{U} \equiv_2
	\callplussize{g}{U \cup V} \equiv_2
	\callplussize{g}{V}
	\]
	\noindent CASE $|U \cup V| \ge n - e \ge 2(e+1)$:\\
	We can pick subsets: $U' \subseteq U, V' \subseteq V$, such that $|U'| = |V'| = e + 1$. Now:
	\[
	\callplussize{g}{U} \equiv_2
	\callplussize{g}{U'} \equiv_2
	\callplussize{g}{U' \cup V'} \equiv_2
	\callplussize{g}{V'} \equiv_2
	\callplussize{g}{V}
	\]
	
\end{proof}

\begin{lemm}\label{l:flippabletocontr}
	Let \maintemplateM - symmetric idempotent template, e - some natural number, G~-~infinite set of \flippable{e} functions.\\
	If \maintemplateM is compatible with G, then it is compatible with: $\Min, \Max, \Xor, \text{or } \AT$.
\end{lemm}
\begin{proof}
	
	In this proof, if for function $f$, and number $k$ following holds: \\$\forall_{U, V: |U| = |V| = k} \ \callonset{f}{U} = \callonset{f}{V}$,
	then we use such abbreviation: $f(k) = \callonset{f}{V}$ for some $V$ such that $|V| = k$.
	
	Since $g$ is \flippable{e} then from Lemma \ref{l:flip}: \[\forall_{x, x'} n-e > x > e \land n-e > x' > e \implies g(x) \equiv_2 g(x') + x + x' \]
	Let $G_{even} = \set{g: g \in G \land arity(g) = 2k}$, $G_{odd} = \set{g: g \in G \land arity(g) = 2k + 1}$.
	At least one of $G_{odd}$, $G_{even}$ must be infinite.\\
	\textbf{CASE \boldmath{$G_{even}$} is infinite:}\\
	Pick $g$ from $G_{even}$ with sufficiently large arity. Obtain $g'$ by partitioning variables of $g$ into sets of even size between $e +1$ and $2e +2$. Now, by Lemma~\ref{l:flippabletocontr} there is some $c \in \set{0,1}$ such that $g'(x) = c$ if $x \not\in \set{0, arity(g')}$, and thus $g'$ is min or max ($g'$ cannot be constant from idempotency of \maintemplateM and thus $G$). Since we can pick g of arbitrary large arity,
	and $arity(g') \ge arity(g)/(2e +2) $, we have min or max of arbitrary large arity. So \maintemplateM is compatible with infinite family of mins or infinite family of maxes and thus, by taking minors, \maintemplateM is compatible with \Min or \Max.\\
	\textbf{CASE \boldmath{$G_{odd}$} is infinite:}\\
	Pick $g$ from $G_{odd}$ with sufficiently large arity n. Obtain $g'$ by partitioning variables of $g$ into odd number of sets of odd size between $e +1$ and $2e +2$.
	Let $k$ minimal natural number, such that $2k+1 \ge e +1$, $m =arity(g')$.
	Then (again by Lemma~\ref{l:flip}):
	\begin{itemize}
		\item if $g(2k+1) = 0$, then $g'$ is xor
		\item if $g(2k+1) = 1$, then
		\[
		g'(x) =
		\begin{cases}
		1&\text{ if } x = m \\
		0&\text{ if } x = 0 \\
		1-xor(x)&\text{ else.}
		\end{cases}
		\]
		Take minor of $g'$: $f'(x_1, x_2, \ldots, x_{(m-1)/2}, y) = g'(x_1,x_1,x_2,x_2,\dotsc,x_{(m-1)/2},x_{(m-1)/2},y)$.
		Function $f'$ is $\frac{m+1}{2}$-ary almost negation.
	\end{itemize}
	Since we can pick g of arbitrary large arity and $m \ge arity(g)/(2e +2) $, we have xor or almost negation of arbitrary large arity. So \maintemplateM is compatible with infinite family of odd-arity xors or infinite family of almost negations and thus, by taking minors, \maintemplateM is compatible with \Xor, or (by Lemma~\ref{l:anisat}) with \AT.
	
\end{proof}
Fact that each $g \in G $ is \flippable{e} together with Lemma~\ref{l:flippabletocontr} combined with assumptions made by Proposition~\ref{get-xor-at} and our proof, are sufficient to finish proof of Proposition~\ref{get-xor-at}, because we forbidden $\clo M$ from having \Min, \Max and \AT so it must has \Xor.

\section{Proof of Proposition~\ref{get-small-fixing}}\label{app:get-small-fixing}

\begin{prop}
	Let $\Gamma$ be a symmetric language such that $\clo M = \pol\Gamma$ is idempotent.
	If $\clo M$ has bounded antichains
	and does not include  any of $\THR q$ then it has
	small fixing sets.
\end{prop}

Let assume, that antichains in $\clo M$ are bounded by constant $n-2$.

\begin{prop}\label{p:fewdisjointsets}
	An operation $f$ is compatible with $\rel{1,{0,\dotsc,n-2},n}$ if and only if
	it has no antichain of size $n-1$.
\end{prop}
\begin{proof}
	\begin{description}
		\item{CASE $\implies$:} if there would be at least $n-1$ disjoint \onesets then in first $n-1$ rows we can put ones on these \onesets (for each row we would pick different \oneset) and all remaining ones in last row. Since ones in first $n-1$ rows are forming \onesets then applying $f$ would result in a tuple with at least $n-1$ ones -- contradiction.
		\item{CASE $\impliedby$:} if $f$ is not compatible with $\rel{1,{0,\dotsc,n-2},n}$ then there is a matrix witness of non-compatibility which has at least $n-1$ rows with ones forming a \onesets. Since each tuple contains only one one then all these \onesets are disjoint -- contradiction.
	\end{description}
\end{proof}

It easily follows, that $\clo M$ is compatible with $\rel{1,{0,\dotsc,n-2},n}$. We proceed to establishing a number of easy facts about thresholds of symmetric relations.

\begin{fact}
	The following hold:
	\begin{itemize}
		\item A relation $\rel{a,J,n}$ is compatible with all $q$-thresholds
		if and only if $\left(n-\frac{n-a}{1-q},\frac{a}{q}\right)\cap \{0,\dotsc,n\}$
		is a subset of $J$.
		\item If a relation $\rel{I,J,n}$ with $a\in I$ and $b\notin J$
		is compatible with all $q$-thresholds then either $b>a$ and $q\geq a/b$
		or $b<a$ and $q\leq\frac{a-b}{n-b}$.
	\end{itemize}
\end{fact}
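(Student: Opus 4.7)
To analyze when a single $q$-threshold $f$ of arity $m$ (so $mq$ is not an integer) is compatible with $\rel{a,J,n}$, I arrange the $m$ input tuples as rows of an $m\times n$ $0/1$ matrix with every row sum equal to $a$; the column sums $s_1,\ldots,s_n$ then satisfy $\sum_j s_j=ma$. Column $j$ contributes $1$ to the output iff $s_j>mq$ and $0$ iff $s_j<mq$, so the output Hamming weight is $k:=|\{j:s_j>mq\}|$. By a standard matrix-filling argument, every sequence $(s_1,\ldots,s_n)$ with $0\le s_j\le m$ and $\sum s_j=ma$ is realised by such a matrix, so $k$ is achievable at arity $m$ precisely when there exist $s_j$'s with $s_j\ge\lceil mq\rceil$ in $k$ columns, $s_j\le\lfloor mq\rfloor$ in the other $n-k$, summing to $ma$; equivalently, $k\lceil mq\rceil\le ma\le km+(n-k)\lfloor mq\rfloor$.

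Next I take the union over valid arities. The two inequalities read $k\le ma/\lceil mq\rceil$ and $k\ge(ma-n\lfloor mq\rfloor)/(m-\lfloor mq\rfloor)$; as $m\to\infty$ these bounds tend to $a/q$ and $n-(n-a)/(1-q)$ respectively. If an integer $k$ lies in the open interval $(n-(n-a)/(1-q),\,a/q)$, both bounds hold for all sufficiently large $m$, so $k$ is produced by some $q$-threshold. Conversely, if $k\ge a/q$ then $k\lceil mq\rceil>k\cdot mq\ge am$ (strict, since $mq$ is non-integer), so the upper bound fails for every $m$; symmetrically at the other endpoint. Hence the set of output weights forced by the whole family of $q$-thresholds is exactly $(n-(n-a)/(1-q),\,a/q)\cap\{0,\ldots,n\}$, and $\rel{a,J,n}$ is compatible with all $q$-thresholds iff this set is contained in $J$. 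This gives the first bullet.

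For the second bullet, $\rel{a,J,n}$ is a strict relaxation of $\rel{I,J,n}$ (tighten the input side to Hamming weight exactly $a\in I$), so compatibility of $\rel{I,J,n}$ transfers to $\rel{a,J,n}$ and by the first bullet, $b\notin J$ forces $b\notin(n-(n-a)/(1-q),\,a/q)$. Thus either $b\ge a/q$ or $b\le n-(n-a)/(1-q)$. In the first case $0<q<1$ gives $a/q>a$, so $b>a$ and $q\ge a/b$. In the second case, multiplying by $1-q>0$ and rearranging gives $a-b\ge q(n-b)$; since $q$-thresholds are idempotent we have $a\in I\subseteq J$, so $b\ne a$, while $n-(n-a)/(1-q)\le a$ forces $b<a$, hence $n-b>0$ and $q\le(a-b)/(n-b)$.

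The one point that needs care is the strictness at the endpoints $a/q$ and $n-(n-a)/(1-q)$: the assumption that $mq$ is not an integer makes $\lceil mq\rceil-mq>0$ and $mq-\lfloor mq\rfloor>0$, which is precisely what excludes the endpoint integers from being achievable and gives the open interval in the statement. The realisability of admissible column-sum sequences by a $0/1$ matrix with constant row sums is routine (a direct greedy construction, or Gale--Ryser) and is not the substantive obstacle.
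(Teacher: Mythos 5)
Your argument is correct, and there is nothing in the paper to compare it against: the authors state this Fact without proof, and your computation is precisely the one they are implicitly relying on. The reduction to column sums (weight $k$ is producible at arity $m$ iff one can choose column sums with $k$ of them at least $\lceil mq\rceil$, the rest at most $\lfloor mq\rfloor$, totalling $ma$, i.e.\ $k\lceil mq\rceil\le ma\le km+(n-k)\lfloor mq\rfloor$), the observation that any admissible column-sum vector is realisable with constant row sums $a$ (Gale--Ryser is indeed automatic here), and the passage $m\to\infty$ along arities with $mq$ non-integral together give the first bullet; the second bullet then follows by restricting the input side to weight $a$ and using idempotence, as you do. One caveat: your converse step ``if $k\ge a/q$ then $k\lceil mq\rceil>kmq\ge am$'' needs $k\ge 1$, and the symmetric step at the lower endpoint needs $k\le n-1$. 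This is not a hole you could have patched, because for $a=0$ (resp.\ $a=n$) the weight $a$ itself is always produced (apply the idempotent threshold to $m$ identical rows) yet lies on the excluded boundary of the open interval, so the first bullet as literally stated fails in these degenerate cases (take $a=0$ and $0\notin J$: the interval is empty, but the relational pair is not compatible). You should therefore either record the standing assumption $0<a<n$ for the first bullet, or state the achievable set as the open interval together with $\{a\}$. This blemish does not propagate to the second bullet --- the part the paper actually uses later --- since your derivation there only needs that the interior integers are achievable and that $b\neq a$, the latter being supplied, as you note, by idempotence forcing $a\in J$ while $b\notin J$.
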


For the rest of the proof we will assume, that there is no threshold in $\clo M$ with aim of proving that in such case $\clo M$ will have a small fixing set.

The only two possible reasons for not having any threshold are relations:
\begin{itemize}
	\item $\rel{I,J,n}$ and $\rel{K,L,m}$ in $\Gamma$ with $a\in I, b\notin J,c\in K,d\notin L$ such that $b>a$, $d<c$ and $(c-d)/(m-d)<a/b$.
	\item $\rel{I, J, n} \in \Gamma$ with $a < b < c$, such that $\set{a, c} \not= \set{0, n}$ and $a, c \in I, b \not\in J$
\end{itemize}

Using standard relaxations for the first case (removing numbers from left set, adding numbers to right set and moving to the right in case $\rel{I,J,n}$ and to the left in case $\rel{K,L,m}$) we conclude that $\clo M$ is compatible with $\rel{a,{0,\dotsc,b-1},b}$ and $\rel{c',{1,\dotsc,d'},d'}$ (where $c' =c-d$ and $d' = m-d$) with $c'/d' < a/b'$. In second case by using same relaxations and possible flipping the relation we conclude that $\clo M$ is compatible with $\rel{{1, l}, \set{0,\dotsc, k-1, k+1, \dotsc l},l}$.
This reduces a problem to proving the following propositions:

\begin{prop}\label{p:gap}
	If $\clo M$ is compatible with $\rel{{1, l}, \set{0,\dotsc, k-1, k+1, \dotsc l},l}$ for some $k,l$ and have bounded antichains then it has small fixing set.
\end{prop}
\begin{proof}
	See Section \ref{a:gap}.
\end{proof}

\begin{prop}\label{p:fake}
	Let $a,b,c,d$ and $n$ be natural numbers such that $c/d<a/b$.
	The polymorphisms of the three relations
	\begin{itemize}
		\item $\rel{a,{0,\dotsc,b-1},b}$
		\item $\rel{c,{1,\dotsc,d},d}$ and
		\item $\rel{1,{0,\dotsc,n-2},n}$
	\end{itemize}
	have small fixing set.
\end{prop}
\noindent In order to proceed we need one more definition:
\begin{defn}
	A function $f$ is {\em compatible with $\rels{a,{0,\dotsc,b-1},b}$},
	if for every evaluation of $f$ on tuple of weight $a$ producing $b$ ones,
	in every row is a minimal \oneset~%
	(i.e. flipping any non-empty set of argument from $1$ to $0$ we change the result to $0$).
	
	Similarly $f$ is {\em compatible with $\rels{c,{1,\dotsc,d},d}$}
	if for every evaluation of $f$ on tuple of weight $c$ producing only zeros
	every row is minimal \zeroset.
\end{defn}
\begin{figure}[h]
	\begin{center}
		\starexample
	\end{center}
	\caption{Example of $f$ compatible with $\rels{2,{0,\dotsc,3},4}$. Because the result of applying $f$ is equal to $\mathbf{1}$, then all red sets must form minimal \onesets of $f$.}
	\label{fig:starexample}
\end{figure}
\noindent We will prove the following proposition, which obviously implies Propostion~\ref{p:fake}.
\begin{prop}\label{p:real}
	Let $a,b,c,d$ and $n$ be natural numbers such that $c/d<a/b$.
	The set of function compatible with
	\begin{itemize}
		\item $\rels{a,{0,\dotsc,b-1},b}$
		\item $\rels{c,{1,\dotsc,d},d}$ and
		\item $\rel{1,{0,\dotsc,n-2},n}$
	\end{itemize}
	has a small fixing set.
\end{prop}
\begin{proof}
	See Section~\ref{a:real}.
\end{proof}
\subsection{Proof of Proposition~\ref{p:gap}}\label{a:gap}
Assume that antichains are bounded by constant $M$ our aim is to prove, that $M$ have a fixing set of size at most $kM$.

\begin{lemm}
	If $f$ is compatible with $\rel{{1, l}, \set{0,\dotsc, k-1, k+1, \dotsc l},l}$, then $f$ does not have a minimal \zeroset of size greater or equal to $k$.
\end{lemm}
\begin{proof}
	Assume otherwise. Let $U$ be a \zeroset of $f$ such that $|U| \geq k$.  Without loss of generality assume that $U = \set{1,\dotsc, |U|}$. Notice that, all sets $V_1 = \set{1} \cup \compl U, V_2 = \set{2} \cup \compl U, \dotsc V_{k-1} = \set{k-1} \cup \compl U$ and $V_k = \set{k,\dotsc, |U|}$ are \onesets because none of susbsets of $U$ are \zerosets, also $\compl U$ is a \oneset so if applying $f$ to matrix created by  combining this sets as rows (first $k$ rows are sets $V_1,\dotsc V_k$ and all next are just $\compl U$)  we will result in tuple with exactly $k$ ones which is contradiction on compatibility with $\rel{{1, l}, \set{0,\dotsc, k-1, k+1, \dotsc l},l}$.\\
	See Figure~\ref{fig:gaptable} for example.
	\begin{figure}[h]
	\begin{center}
		\gaptable
	\end{center}
	\caption{Red cells are forming \onesets $V_1,\dotsc V_k$ and grey are forming \zeroset $U$}.
	\label{fig:gaptable}
	\end{figure}
\end{proof}

\begin{lemm}
	If $\clo M$ does not have a minimal \zeroset of size at least $k$ then it has a fixing set of size at most $kM$.
\end{lemm}
\begin{proof}
	Fix $f \in \clo M$. Let $\clo L$ be the largest disjoint family of minimal \zerosets of $f$ and let $U = \bigcup_{V \in \clo L} V$. Obviously $U$ intersect with every \zeroset of  $f$ so $U$ is a \onefset of $f$. Since $|\clo L| \leq M$ because of bound on antichains, and for each $V \in \clo L: |V| \leq k$ then $|U| \leq kM$ and we found a \onefset of size at most $kM$.
\end{proof}

Combining two above lemmas gives us full proof of Proposition~\ref{p:gap}. 

\subsection{Proof of Proposition~\ref{p:real}}\label{a:real}
Before we launch into the proof of the proposition we note a few easy consequences of the assumptions.
We fix $a,b,c,d,n$ satisfying the assumptions of Proposition~\ref{p:real} and denote the $C$ the set of functions in question.

\begin{lemm}
	If $f$ is in $C$ then $f$ cannot have $(n-1)c$ pairwise disjoint \zerosets i.e. is compatible with relation $\rel{(n-1)c,{2,\dotsc,(n-1)c+1},(n-1)c+1}$.
\end{lemm}
\begin{proof}
	Assume otherwise. Note that $f$ is compatible with $\rels{c,{1,\dotsc,c+1},c+1}$. Let $\sets S$ be family a family of $(n-1)c$ pairwise disjoint \zerosets and $\sets L$ be a subfamily of $S$ of size $c$. Then $\bigcup\sets L$ is a \oneset, because if we use different \zerosets from $\sets S$ as first $c$ rows, then zeros in last row does not form a minimal \zeroset (because, there are at least $c$ disjoint \zerosets covered by them), so result on last line must be equal to $1$, which implies that last row is creating a \oneset. Having $(n-1)c$ pairwise disjoint \zerosets would produce $n-1$ pairwise disjoint \onesets which is a contradiction.\\
	Second part follows from Proposition~\ref{p:fewdisjointsets}.
\end{proof}

The first order of business is to deal with the cases:
\begin{itemize}
	\item $c/d\leq 1/2<a/b$
	\item$c/d<1/2\leq a/b$.
\end{itemize}
By the discussion above the set $\flip C$ is compatible with:
\begin{itemize}
	\item $\rels{d-c,{0,\dotsc,d-1},d}$.
	\item $\rels{b-a,{1,\dotsc,b},b}$ and
	\item $\rel{1,{0,\dotsc,(n-1)c-1},(n-1)c+1}$
\end{itemize}
Therefore, by flipping the set $C$ if necessary, we can assume that $c/d<1/2\leq a/b$. Our aim is to prove, that size of smallest fixing set is bounded by $(n-2)(n-1)(c+2)$.

Take arbitrary $f\in C$ and let the arity of $f$ be $m$. If $m \leq (n-2)(n-1)(c+2)$ then, we are done.

Otherwise let $I$ be an arbitrary \oneset of $f$;
by compatibility with $\rels{a,{0,\dotsc,b-1},b}$ we conclude that $f(\compl I) = 0$ or that $\compl I$ is a minimal \oneset (by star property).\\
Now let $I$ be a minimal \oneset such that $|\compl I |\geq (n-1)c$.
Using compatibility with $\rels{c,{1,\dotsc,d},d}$ with the fact about complement of $I$ above, we conclude that if $|I|\geq (n-1)(c+2)$ we can produce $(n-1)$ disjoint \onesets which is a contradiction. Strategy for that is straightforward.\\
We will start with picking $(n-1)c$ distinct positions from $I$ denoted as $p^i_j \text{ for } 1 \leq i \leq n-1, 1 \leq j \leq c$ and another $n-1$ disjoint positions (disjoint with all $p^i_j$) from $I$ and denote them as $t^i \text{ for } 1 \leq i \leq n-1$. We will also pick the same number of positions from $\compl I$ and denote them as $q^i_j \text{ for } 1 \leq i \leq n-1, 1 \leq j \leq c$. Now we will construct $i$-th \oneset.\\
If $f(\compl I) = 0$, then set $V = \set{p^i_1, \dots, p^i_c}$ is a \oneset or $\compl V$ is a minimal \zeroset.
If $\compl I$ is a minimal \oneset, then set $V = \set{p^i_1, \dots, p^i_c, q^i_1, \dots, q^i_c}$ is a \oneset or $\compl V$ is a minimal \zeroset.\\
In both cases if $V$ is a \oneset then we are done, if $\compl V$ is a minimal \zeroset, then $V \cup \set{t^i}$ is a \oneset. See Figure~\ref{f:creatingonesets} for example of hardest case, when $d=2c+1$ and $\compl I$ is a minimal \oneset.

\begin{figure}[h]
	\begin{center}
		\creatingonesets
	\end{center}
	\caption{First 4 columns represents set $I$, first and third columns in first and second rows represents elements which are removed to make the results for first two rows equal to zero. Same thing happens in columns 9-12, but for last 2 rows. Third row creates a \oneset or minimal \zeroset - in first case we are done in second we need to add one one to get a \oneset, but columns 5-8 gives us enough space to be able to keep created \onesets disjoint.}
	\label{f:creatingonesets}
\end{figure}

Take a maximal family of disjoint, minimal \onesets of size smaller than $(n-1)(c+2)$~%
(the number of sets in the family is not greater then $n-2$)
and let $J'$ be the union of the family~%
(note that $|J'|<(n-2)(n-1)(c+2)$).
If $|J'|< (n-2)(n-1)(c+2)$  put $J$ to be any superset of $J'$ of size $(n-2)(n-1)(c+2)$, otherwise put $J=J'$.
We claim that every \oneset intersects $J$.
Indeed if $I$ is a \oneset then either $|\compl I| < (n-2)(n-1)(c+2)$ and $I\cap J\neq\emptyset$ by size considerations,
or $I$ contains a minimal \oneset of size smaller than $(n-1)(c+2)$
and it has to intersect $J$ as the family of disjoint \onesets which produced $J$ was maximal.
This immediately implies that $J$ is a \zerofset and we are done.

In the remaining case either:
\begin{itemize}
	\item $c/d<a/b<1/2$ or
	\item $1/2< c/d<a/b$
\end{itemize}
and, by flipping if necessary, we can assume we are in the first case.

Our proof is inductive and tries to reduce current case to the previous one. Notice that if we would be able to find a fixing set which will be bounded by $dM'$, where $d$ is some constant for our language and $M'$ is a boundary on a size of smallest fixing set of $\clo{M'}$ compatible with:
\begin{itemize}
	\item $\rels{a,{0,\dotsc,b-a-1},b-a}$.
	\item $\rels{c,{1,\dotsc,d-c},d-c}$ and
	\item $\rel{1,{0,\dotsc,(n-2)},n}$
\end{itemize}
then after finite number of steps we will reduce our case to first one and we will find a good boundary for $\clo M$.

Fix any $f\in C$, choose a minimal \zeroset of $f$ and denote it by $I$.
Let $f_I$ be the function of arity $|I|$ obtained by plugging zeros to the coordinates outside of $I$~%
(i.e. $f_I(J):=f(I\cap J)$).
The first goal is to show that $f_I$ has a small~%
(uniformly bounded in $C$) fixing set.
The function $f_I$ is clearly compatible with $\rels{c,{1,\dotsc,d-c},d-c}$ (because $I$ is a \zeroset so putting ones on removed columns in first $c$ rows would result in zeros on first $c$ rows of result) and with $\rel{1,{0,\dotsc,n-2},n}$.
Choose any partition of $I$ into sets of sizes greater than $a+1$ and smaller than $2a+1$,
and let $g_I$ be obtained from $f_I$ by identifying variables in each set of the partition.
Clearly $g_I$ is compatible with $\rels{c,{1,\dotsc,d-c},d-c}$ and with $\rel{1,{0,\dotsc,n-2},n}$
we claim that it is also compatible with $\rels{a,{0,\dotsc,b-a-1},b-a}$.

Suppose it is not; then there is an evaluation~%
(taking tuples of Hamming weight $a$)
which produces $b-a$ ones and on one coordinate it is not a minimal \oneset.
Let $E$ be the matrix witnessing that $g$ is not compatible with $\rels{a,{0,\dotsc,b-a-1},b-a}$~%
(i.e. $g(E)$ is a tuple of $1$'s),
and $E'$ be the same as $E$, but in one row some $1$'s became zeros~%
(and still $g(E')$ is a tuple of $1$'s).
We assume without loss of generality that $E$ and $E'$ differ on first coordinates only and evaluate
function $f$ on matrix from Figure~\ref{f:ematrix}
\begin{figure}[h]
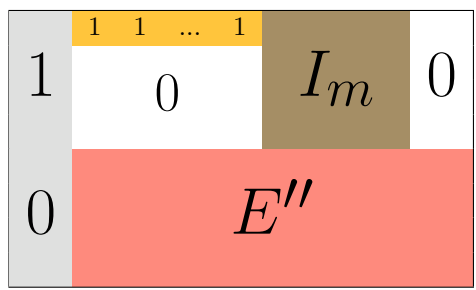

	\begin{center}
		\lasttable
	\end{center}
	\caption{Gray columns are columns not contained in \zeroset $I$, by putting ones in left upper corner we are sure, that if we add any number to the first $a$ rows, then the result will switch to 1. Yellow submatrix with ones on the first row combined with brown identity matrix next to it allows us to find out that first rows of matrix don't contain minimal \onesets.}
	\label{f:ematrix}
\end{figure}
where $E''$ is $E'$ but extended to an evaluation of $f_I$.
The result of application of $f$ to these arguments is a tuple of $1$'s
and it remains so after changing one of the arguments in the first row to $0$~%
(by minimality of the \zeroset).
This contradicts compatibility $f$ with $\rels{a,{0,\dotsc,b-1},b}$ and we conclude that
$g_I$ is compatible with $\rels{a,{0,\dotsc,b-a-1},b-a}$.

We can apply the inductive assumption and conclude that $g_I$ has a fixing set of size at most $M$~%
(and the constant $M$ is independent on the choice of $f$ and $I$).
This implies that $f_I$ has a \oneset or a \zeroset of size at most $(M+1)(2a+1)$.
Thus each minimal \zeroset $I$ has inside a small set, say $I'$, so that $I'$ is \oneset or \zeroset for $f_I$.
Let $\sets L_0$ be a family of small~%
(i.e. size bounded by $(M+1)(2a+1)$)
sets $I'$ which are \zeroset for $f_I$ defined by some minimal \zeroset $I$,
and $\sets L_1$ of these which are \oneset for some \zeroset $I$.

The family $\sets L_1$ cannot have more than $n-2$ pairwise disjoint members as each element of $\sets L_1$ is a \oneset of $f$.
Now assume that we have pairwise disjoint sets $I'_1,\dotsc,I'_{c+1}\in\sets L_0$.
Following the reasoning already explained in the sketch in Section~\ref{sect:proof},
by compatibility with $\rels{c,{1,\dotsc,d},d}$,
either $\bigcup_{i=1}^{c} I'_1$ or $\bigcup_{i=1}^{c+1}I'_i$ is a \oneset.
That means that the number of pairwise disjoint sets in $\sets L_0$ is smaller than $(c+1)(n-1)$.
Take $\sets L_1'$ to be the maximal pairwise disjoint subset of $\sets L_1$,
and $\sets L'_0$ maximal disjoint subset of $\sets L_0$.
Take $J$ to be the $\bigcup\sets L_1'\cup\bigcup\sets L_0'$;
the $|J|<(c+2)(n-1)(M+1)(2a+1)$ and every \zeroset intersects $J$ non-empty~%
(otherwise we contradict maximality)
therefore $J$ is a \onefset and we are done.



\bibliography{Bibliography}

\end{document}